\documentclass[12pt,a4paper]{article}

\usepackage{geometry}
\usepackage[utf8]{inputenc}
\usepackage[T1]{fontenc}
\usepackage{amsmath,amssymb,amsthm}
\usepackage{graphicx}
\usepackage{hyperref}
\usepackage{authblk}      
\usepackage{fancyhdr}     
\usepackage{titlesec}     
\usepackage{natbib}
\makeatletter
\renewcommand\@biblabel[1]{}
\makeatother
\titleformat{\section}
  {\normalfont\normalsize\bfseries}{\thesection.}{0.5em}{}
\titleformat{\subsection}
  {\normalfont\normalsize}{\thesubsection.}{0.5em}{}
\titleformat{\subsubsection}
  {\normalfont\normalsize\itshape}{\thesubsubsection.}{0.5em}{}
  
\titlespacing{\section}{0pt}{1.5ex plus 0.5ex minus 0.2ex}{0.8ex plus 0.2ex}
\titlespacing{\subsection}{0pt}{1.2ex plus 0.4ex minus 0.2ex}{0.6ex plus 0.2ex}
\titlespacing{\subsubsection}{0pt}{1ex plus 0.3ex minus 0.1ex}{0.5ex plus 0.1ex}

\fancypagestyle{firstpage}{
    \fancyhf{}
    \fancyfoot[C]{\thepage}
    
}

\newcommand{\R}{\mathbb{R}}
\newcommand{\mP}{\mathcal{P}}
\def\hsig{\hat{\Sigma}}
\newcommand{\n}{\frac{1}{n}\sum_{i=1}^n}
\def\sE{\mathcal{E}}

\newcommand{\bigO}{\mathcal{O}}

\newcommand{\argmin}{\mathop{\mathrm{arg\,min}}}

\newcommand{\tr}{\mathrm{tr}}
\newcommand{\tran}{\mathrm{T}}
\def\Cov{\mathop{\mathrm{Cov}}\nolimits}
\newcommand{\ave}{\mathrm{ave}}
\newcommand{\MM}{\mathcal{M}}
\newcommand{\mA}{\mathcal{A}}

\newtheorem{theorem}{Theorem}
\newtheorem{lemma}{Lemma}
\newtheorem{example}{Example}

\title{\fontsize{16}{20}\selectfont\textbf{Robust regularized covariance matrix estimation: well-posedness and convergent algorithm}}

\author[1]{Mengxi Yi}
\author[2]{David E. Tyler}

\affil[1]{School of Statistics, Beijing Normal University, Beijing, China}
\affil[2]{Department of Statistics, Rutgers University, Piscataway, USA}

\date{}  

\newcommand{\keywords}[1]{\par\noindent\textbf{Keywords:} #1}

\begin{document}

\maketitle
\thispagestyle{firstpage}

\begin{abstract}
\noindent
In this paper, we study properties of penalized and structured M-estimators of multivariate scatter, based on geodesically convex but 
not necessarily smooth penalty functions. Existence and uniqueness conditions for these penalized and structured estimators are given. However, we show that the standard fixed-point algorithm which is usually applied to an M-estimation problem does not necessarily converge for penalized M-estimation problems. Hence, we develop a new but simple re-weighting algorithm and prove that it has monotone convergence for a broad class of penalized and structured M-estimators of multivariate scatter.

\vspace{1em}
\keywords{Constrained estimation, Geodesic convexity, M-estimation, MM-algorithm, Regularization, Reweighting algorithm.}
\end{abstract}

\vspace{2em}

%

\section{Introduction and Motivation} \label{Sec:M-intro}
Covariance matrix estimation is an important topic in multivariate statistics. For a random sample $x_1,\cdots,x_n\in\R^p$, the classical estimate of the covariance matrix $\Sigma=\mathbb{E}[(x-\mathbb{E}x)(x-\mathbb{E}x)^{\tran}]$ is the empirical covariance matrix, $S_n=\frac{1}{n}\sum_{i=1}^n(x_i-\bar{x})(x_i-\bar{x})^{\tran}$, where $\bar{x}=\frac{1}{n}\sum_{i=1}^nx_i$. When the sample size $n$ is small relative to the dimension $p$, and in particular when $n < p$,  $S_n$ is well-known to be a rather poor estimate of $\Sigma$ even when sampling from a multivariate normal distribution.
In such cases, one may wish to consider more parsimonious models for the $p(p+1)/2$ parameters in the population covariance matrix,
such as a Toeplitz model \cite{Snyder:1989}, factor model \cite{Fan:2008} or a graphical model \cite{Dempster:1972}. Alternatively, one may wish to use a 
regularized or a penalized version of the sample covariance matrix, such as the Ledoit-Wolf estimator \cite{Ledoit-Wolf:2004} or the graphical lasso \cite{Yuan-Lin:2007}, among others \cite{Warton:2008, Deng-Tsui:2013}. Since the negative log likelihood, taken as a loss function, under multivariate normal sampling is convex in the inverse of the covariance matrix, $\Sigma^{-1}$, it is natural to consider additive penalties which are also convex in the inverse of the covariance matrix. More recently, \cite{Tyler-Yi:2018} and \cite{Tyler-Yi:2019} studied a class of smooth and non-smooth penalties which do not have this convexity property, but rather have the property of being geodesically convex. By utilizing the little known property that the negative log likelihood under multivariate normal sampling is also geodesically convex (g-convex), they show that such penalized sample covariance problems yield unique solutions.

The sample covariance matrix is well known to be highly non-resistant to outliers in the the observations and to be a highly inefficient estimator when the samples are drawn from heavy-tailed distributions. Robust estimators can protect against this. Here we concentrate on robust penalized method for estimating covariance matrices for multidimensional observations. In the context of robust estimation, the covariance matrix is often referred to as the pseudo-covariance or scatter matrix to allow for nonexistence for the second order moments. Our aim is to study robust versions of the penalized sample covariances, which are defined by applying a rigorous and general treatment of penalization to loss functions which are more robust than the normal negative log-likelihood function. In particular, we consider the class of loss functions which give rise to the monotonic M-estimators of multivariate scatter. Using the concept of geodesic convexity, we derive necessary and sufficient conditions ensuring the well-posedness of a broad class of penalized M-estimators of scatter. These well-posedness conditions, which guarantee both existence and uniqueness of the estimator, are generally weaker than the traditional conditions required for M-estimator of scatter. 

Since its introduction, a proven approach for finding a solution to M-estimating equations has been via fixed-point algorithm, which can be related to iterative re-weighted least squares algorithms \cite{Tyler:1997, Kent-Tyler:1991}. Although it is known to work in the unpenalized setting, in general this algorithm is not necessarily applicable to the penalized setting. Furthermore, even though a fixed point algorithm has been successfully applied when using the Kullback-Leibler penalty \cite{Ollila-Tyler:2014}, there are no convergence results for general penalties. One result we show in this paper is that this algorithm does not necessarily converge for other penalty functions, and in particular for the Riemannian penalty. Consequently, we propose in section \ref{Sec:Alg} a new but simple re-weighting algorithm. We prove this algorithm always converges for any g-convex penalized or structured M-estimator of multivariate scatter, and that the convergence is monotone.

The rest of the paper is organized as follows. Section \ref{Sec:M-est} reviews the M-estimates of multivariate scatter, introduces the general framework for the penalized version and establishes some new results regarding their existence and uniqueness. In addition, subsection \ref{Sec:Bayes} discusses a Bayesian interpretation of penalized M-estimation. In section \ref{Sec:Alg}, the aforementioned re-weighting algorithm for the penalized  case is presented and some examples are given therein. Section \ref{Sec:constraint} demonstrates the duality between the problem of minimizing a g-convex loss function and a minimization problem under a g-convex constraint. Algorithms for a general g-convex constraint problem are also shown here. Section \ref{Sec:con} includes a concluding remark. The concept of g-convexity is reviewed in Appendix \ref{App:gc} and proofs are given in Appendix \ref{App:proofs}.

\section{Regularized M-estimators of multivariate scatter} \label{Sec:M-est}
\subsection{Review of elliptical distributions and the M-estimators of scatter}

A random vector $x\in\R^p$ is said to have a real \emph{elliptically symmetric distribution} with center $\mu$ and scatter matrix $\Sigma\in\mP^p$, denoted by $\sE(\mu, \Sigma, g)$, if it has a density of the form
\[
f(x)=C_{p,g}\det(\Sigma)^{-1/2}g\{(x-\mu)^{\tran}\Sigma^{-1}(x-\mu)\},
\]
where $\mP^p$ represents class of positive definite matrices of order $p$, $C_{p,g}>0$ denotes a normalizing constant ensuring that $f(x)$ integrates to one, and $g$ is a given function $g:\R^{+}\to\R^{+}$, where $\R^{+}=\{x\in\R|x\ge 0\}$. The scatter matrix $\Sigma$ is proportional to the covariance matrix whenever $x$ possesses second moments. When the second moments do not exist, the scatter matrix $\Sigma$ can be viewed  as a generalization of the covariance matrix. Note that over the semi-parametric class of elliptical distributions, i.e.\ when the radial function $g$ is not specified, $\Sigma$ is only well-defined up to proportionality.  The function $g$ determines the radial distribution of the elliptical population and hence the degree of its ``heavy-tailedness". For example, when $g(s)=\exp(-s/2)$, one obtains the multivariate normal distribution. If $x$ is a random variable having an elliptical distribution $\sE(\mu, \Sigma, g)$, then the standardized variable $z=\Sigma^{-1/2}(x-\mu)$ has a \emph{spherical distribution} centered at $0$, with density $C_{p,g}g(z^{\tran}z)$. The spherical $z$ admits a decomposition $z=rs$, where $r=||z||$ and $s=z/||z||$ are independent, with $s$ being uniformly distributed on the unit sphere. For a detailed account on properties of elliptical distributions, we refer readers to \cite{Kelker:1970,Ollila:2012}.

Since this paper focuses on the covariance matrix and its generalizations, we presume throughout that the $p$-dimensional sample $\{x_1, \ldots, x_n\}$
has been robustly centered by an estimate of $\mu$, e.g. by the vector of marginal medians or the spatial median, or by a known center. Alternatively, the data may be replaced with their pairwise differences. Hereafter, we assume $\mu=0$.

For a given function $g$, the maximum likelihood estimator (MLE) of the scatter matrix, is then the minimum of the corresponding negative log-likelihood function
\begin{equation} \label{eq:ml}  
\ell_{\rho}(\Sigma) = \frac{1}{n} \sum_{i=1}^n \rho(x_i^{\tran} \Sigma^{-1} x_i) + \log\{ \det(\Sigma)\}, 
\end{equation}
over all $\Sigma > 0$, i.e. over all positive definite symmetric matrices of order $p$, here $\rho(s)=-2\log g(s)$. If $\rho$ is differentiable, then setting the derivative of $\ell_{\rho}(\Sigma)$, with respect to $\Sigma$, to zero yields the estimating equation:
\begin{equation}\label{eq:mequation}
\Sigma = \n u(x_i^{\tran}\Sigma^{-1} x_i) x_i x_i^{\tran},
\end{equation}
where $u(s) = \rho^\prime(s)=-2g^{\prime}(s)/g(s)$ represents a weighting function. The critical points of the objective function \eqref{eq:ml} then satisfy the estimating equation \eqref{eq:mequation}.

M-estimators of scatter matrices were defined in \cite{Maronna:1976} as generalizations of the MLEs of an elliptical distribution, namely by allowing for a general $\rho$ function in (\ref{eq:ml}), which need not be related to an elliptical distribution. Hence we refer to (\ref{eq:ml}) as an M-loss function in general, while equation (\ref{eq:mequation}) is called an M-estimating equation. Some examples of ML- and M-estimators are given below.

\emph{Sample covariance matrix.} For the Gaussian sample, $\rho(s)=s$, and $u(s)=\rho^{\prime}(s)=1$, and so (\ref{eq:ml}) becomes
\begin{equation}\label{eq:norm}
\ell(\Sigma;S_n)=\tr(\Sigma^{-1}S_n)+\log\{\det(\Sigma)\}
\end{equation}
where $S_n=\n x_ix_i^{\tran}$ is the sample covariance matrix. When $n> p$, $S_n$ is the MLE and an unbiased estimator of $\Sigma$.

\emph{MLE for an elliptical $t$-family.} For the $t_{\nu}$-distribution, $g(s) = (1+s/\nu)^{-(\nu + p)/2}$ and so $\rho(s)=(\nu+p)\log(\nu+s)$ and $u(s)=(\nu+p)/(\nu+s)$, with $\nu$ being the degrees of freedom. A $t_{\nu}$-distribution has finite second moments $\Cov(x)=\nu/(\nu-2)\Sigma$ only when $\nu>2$. When $\nu=1$, the corresponding distribution is said to have a $p$-variate elliptical Cauchy distribution. The limiting case $\nu\to\infty$ yields the normal distribution.

The loss function (\ref{eq:ml}) associated with the elliptical $t_{\nu}$ distribution is
\[
\ell^t(\Sigma)=\frac{\nu+p}{n}\sum_{i=1}^n\log(\nu+x_i^{\tran}\Sigma^{-1}x_i)+\log\{\det(\Sigma)\}.
\]
The existence and uniqueness of the $t_{\nu}$-MLEs have been considered in \cite{Kent-Tyler:1991}. When $\nu$ goes to zero, the $t_{\nu}$-MLE yields \emph{Tyler's M-estimator} \cite{Tyler:1987a}, which is the unique solution up to proportionality to the M-estimating equations:
\begin{equation*}\label{eq:Tyler}
\Sigma=\frac{p}{n}\sum_{i=1}^n\frac{x_ix_i^{\tran}}{x_i^{\tran}\Sigma^{-1}x_i}.
\end{equation*}
Tyler's M-estimator also corresponds to the MLE of $\Sigma$ when the data comes from an \emph{angular central Gaussian distribution}, for which $\rho(s)=p\log s$ and $u(s)=p/s$ \cite{Tyler:1987b}. The corresponding loss function can then be written as
\[
\ell^{Tyler}(\Sigma)=\frac{p}{n}\sum_{i=1}^n\log(x_i^{\tran}\Sigma^{-1}x_i)+\log\{\det(\Sigma)\}.
\]
For more details and properties of this estimator, see \cite{Tyler:1987a,Tyler:1987b}.


\emph{Huber's M-estimator.} Huber's M-estimator \cite{Huber:1964, Huber:1977} is defined via weight
$$u(s)=\begin{cases}
1/b, & \text{for } s\le c^2\\
c^2/(sb), & \text{for } s>c^2\end{cases}$$
where $c$ is a tuning constant defined so that $r=F_{\chi^2_{p}}(c^2)$ for a chosen $r\,(0< r\le1)$, where $F_{\chi^2_{p}}(\cdot)$ denotes the c.d.f. of the $\chi^2_{p}$. The scaling factor $b$ is usually chosen so that the resulting M-estimate is consistent for the covariance matrix at the normal distribution. When $r=1$,  $u(s)=1$, which corresponds to the sample covariance matrix, whereas $r \to 0$ yields $u(s)=p/s$, corresponding to Tyler's M-estimator.

\subsection{Penalized M-estimation of Scatter}
An obvious way to ``robustify'' the penalized sample covariance matrix is to replace $S_n$  with an M-estimator or some
other robust estimators of covariance. This may be a good approach for large sample sizes, but for smaller sample sizes properties of the M-estimators 
and other affine equivariant estimators of covariance matrices tend not to differ substantially from the sample covariance matrix; see \cite{Tyler:2010}. 
Given that regularization or penalized approaches are primarily of interest when the problems are ill-posed, such alternative methods are
worth exploring. In particular, penalized M-estimators of scatter have been proposed \cite{Warton:2008, Ollila-Tyler:2014}, which are defined as a minimizer over $\Sigma > 0$ of 
\begin{equation} \label{eq:mL} 
L_{\rho}(\Sigma; \eta) = \ell_{\rho}(\Sigma) + \eta \Pi(\Sigma), \end{equation}
where $\Pi(\Sigma)$ denotes a non-negative penalty function, with $\eta \ge 0$ being a tuning parameter. In particular, the penalized sample covariance is defined as a minimizer of the penalized normal likelihood function, which is written as
\begin{equation} \label{eq:nL}
    L(\Sigma;S_n;\eta)=\tr(\Sigma^{-1}S_n)+\log\{\det(\Sigma)\}+\eta\Pi(\Sigma).
\end{equation}

A detailed study of the penalized normal problem based on general penalties are given in \cite{Tyler-Yi:2018,Tyler-Yi:2019}. These studies include proven convergent algorithms for finding the minima of \eqref{eq:nL}. However, finding a minimizer to \eqref{eq:mL} is more involved than in the penalized sample covariance case \eqref{eq:nL}. It has been known since their introduction that, under certain conditions on the function $\rho(s)$ and on the sample, there exists a unique critical point to $\ell_{\rho}(\Sigma)$; see \cite{Maronna:1976, Kent-Tyler:1991} for details.  
This suggests $\ell_{\rho}(\Sigma)$ may possess some convexity property, but unlike the normal case it is not necessarily convex in $\Sigma^{-1}$. Some insightful work \cite{Zhang:2013}, though, has shown $\ell_{\rho}(\Sigma)$ to be geodesically convex in $\Sigma$
whenever the function $\rho(s)$ is convex in $\log(s)$, i.e.\ whenever $h: \R \to \R$ defined by $h(a) \equiv \rho(e^a)$ is convex. The notion of geodesic convexity, is reviewed in the appendix. As with convexity, any local minimum of a g-convex function is a global minimum, and when differentiable any critical point is a global minimum, with the set of all minima being g-convex. In addition, 
if a minimum exists, then the minimum is unique when the function is strictly g-convex. Finally, the sum of two g-convex functions is g-convex, and the sum is strictly g-convex if either of the two g-convex summands is strictly g-convex. These basic properties of geodesic convexity can be applied to establish the following lemma. 

\begin{lemma} \label{Lem:sumgeo}
If $\rho(s)$ and $\Pi(\Sigma)$ are g-convex, then $L_{\rho}(\Sigma; \eta)$ is g-convex and the solution set
\begin{equation} \label{eq:sol} 
\mA_{\rho} = \{\Sigma_* > 0 ~|~ L_{\rho}(\Sigma_*; \eta) \le L_\rho(\Sigma; \eta) \ \mbox{for all} \ \Sigma > 0 \} 
\end{equation}
is g-convex. Furthermore, if either $\rho(s)$ or $\Pi(\Sigma)$ is strictly g-convex, then $L_\rho(\Sigma; \eta)$ is strictly g-convex and hence
the solution set $\mA_{\rho}$ is either empty or contains a single element.
\end{lemma}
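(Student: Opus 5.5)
The plan is to work along geodesics of $\mP^p$ under the affine-invariant metric. For $\Sigma_0,\Sigma_1>0$ the geodesic is $\Sigma_t=\Sigma_0^{1/2}(\Sigma_0^{-1/2}\Sigma_1\Sigma_0^{-1/2})^t\Sigma_0^{1/2}$, $t\in[0,1]$. A function is g-convex (strictly g-convex) exactly when its restriction to every such geodesic is convex (strictly convex for $\Sigma_0\neq\Sigma_1$) in $t$. Here ``$\rho(s)$ g-convex'' is read as in the preceding discussion: $h(a)=\rho(e^a)$ is convex, and strictly g-convex means $h$ is strictly convex. I will write $L_\rho=\ell_\rho+\eta\Pi$ and handle the terms one at a time.

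First, $\log\det\Sigma_t=\log\det\Sigma_0+t\log\det(\Sigma_0^{-1/2}\Sigma_1\Sigma_0^{-1/2})$ is affine in $t$, so it is g-linear. Next, fix $x$. Let $Q\Lambda Q^{\tran}$ be the spectral decomposition of $M=\Sigma_0^{-1/2}\Sigma_1\Sigma_0^{-1/2}$ and put $w=Q^{\tran}\Sigma_0^{-1/2}x$. Then $x^{\tran}\Sigma_t^{-1}x=\sum_j w_j^2\lambda_j^{-t}=\sum_j w_j^2e^{-t\log\lambda_j}$. This is a nonnegative combination of exponentials of affine functions of $t$, so its logarithm $a(t)$ is convex in $t$ (log-sum-exp). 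Since $h$ is convex and also nondecreasing, $\rho(x^{\tran}\Sigma_t^{-1}x)=h(a(t))$ is convex in $t$. Monotonicity holds because $u=\rho'\ge0$ for the loss functions considered, as $\rho$ is nondecreasing in $s$; I would state this as part of the standing assumptions on $\rho$. Averaging over $i$ and adding the g-linear $\log\det$ term shows that $\ell_\rho$ is g-convex. Since $\eta\ge0$, $\eta\Pi$ is g-convex, and the sum $L_\rho$ is g-convex.

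For the solution set, take $\Sigma_0,\Sigma_1\in\mA_\rho$ with minimal value $m$. Convexity of $t\mapsto L_\rho(\Sigma_t;\eta)$ gives $L_\rho(\Sigma_t;\eta)\le m$, and minimality forces equality. Hence the whole geodesic lies in $\mA_\rho$, so $\mA_\rho$ is g-convex. Under strict g-convexity, two distinct minimizers would give $L_\rho(\Sigma_{1/2};\eta)<m$, which is a contradiction. So $\mA_\rho$ is empty or a singleton.

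The main obstacle is the strict part when the strictness comes from $\rho$. The function $a(t)$ is constant, and therefore not strictly convex, along the scaling geodesic $\Sigma_t=c^t\Sigma_0$ in the Tyler case, and also whenever every $w_i$ lies in a single eigenspace of $M$. So strictness of $\ell_\rho$ cannot come from the data term alone. My plan is to show that along any nonconstant geodesic, $t\mapsto h(a_i(t))$ is strictly convex for at least one $i$. Strict convexity of $h$ gives this as soon as some $a_i$ is nonconstant, because $a_i$ is convex and a strictly convex increasing function composed with a convex nonconstant function is strictly convex. If instead every $a_i$ is constant, I will argue that $\log\det$ changes linearly with nonzero slope unless $\det M=1$. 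The remaining degenerate configurations then have to be ruled out by a spanning or sample condition of the kind used for existence (as in \cite{Kent-Tyler:1991}). If this fails in general, I would read ``strictly g-convex $\rho$'' as directly meaning that $\ell_\rho$ is strictly g-convex, so that the lemma reduces to the summation property quoted just before it.
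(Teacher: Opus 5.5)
Your non-strict argument is correct and more self-contained than the paper's proof. The paper just quotes Zhang (2013) for g-convexity of $\ell_\rho$ and then uses the general facts about sums and minimizer sets of g-convex functions. Your direct computation (affine $\log\det$, log-sum-exp $a_i(t)$, composition with convex $h$) also exposes a hypothesis the statement omits: $\rho$ must be nondecreasing. For $\rho(s)=-\log s$, $h$ is linear but $\ell_\rho$ is g-concave. Likewise, strictness coming from $\Pi$ needs $\eta>0$.

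\textbf{The gap.} It is the strict case driven by $\rho$, and it cannot be closed under the stated hypotheses. Your patch fails. If every $a_i$ is constant along a geodesic, then $\ell_\rho(\Sigma_t)$ is affine in $t$, and a nonzero $\log\det$ slope does not make it strictly convex. Worse, a g-convex penalty that is affine along that geodesic can cancel the slope exactly. Take $p=2$, $n=1$, $x_1=e_1$, $\rho(s)=s$ (so $h(a)=e^a$ is strictly convex), $\eta=1$, and $\Pi(\Sigma)=\max\{0,-\log\lambda_2(\Sigma)\}$. This $\Pi$ is nonnegative and g-convex, since $-\log\lambda_2(\Sigma)=\max_{\|v\|=1}\log v^{\tran}\Sigma^{-1}v$ is a maximum of your log-sum-exp functions. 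Using $e_1^{\tran}\Sigma^{-1}e_1\ge 1/\lambda_1$,
\[
L_\rho(\Sigma;1)\ \ge\ \lambda_1^{-1}+\log\lambda_1+\max\{0,\log\lambda_2\}\ \ge\ 1 ,
\]
with equality at every $\Sigma=\mathrm{diag}(1,b)$, $0<b\le 1$. So $\mA_\rho$ is nonempty and not a singleton. Already with $\Pi\equiv 0$, the same data give $\ell_\rho(\mathrm{diag}(1,e^t))=1+t$, which is not strictly convex. The paper's proof has the same flaw: it attributes to Zhang (2013) strict g-convexity of $\ell_\rho$ whenever $\rho$ is strictly g-convex and some $x_i\neq 0$, and this example refutes that.

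\textbf{The fix.} The missing hypothesis is that $x_1,\dots,x_n$ span $\R^p$. With it your plan closes at once, provided you use the correct description of the degenerate case. $a_i$ is constant exactly when $w_i$ lies in the eigenvalue-one eigenspace of $M$, i.e.\ when $\Sigma_0^{-1}x_i=\Sigma_1^{-1}x_i$. If $w_i$ lies in some other single eigenspace, $a_i$ is affine but nonconstant. Along $\Sigma_t=c^t\Sigma_0$ it is the Tyler loss $\ell_\rho$, not $a(t)$, that is constant. Now if all $a_i$ were constant, then $(\Sigma_0^{-1}-\Sigma_1^{-1})x_i=0$ for every $i$, and spanning forces $\Sigma_0=\Sigma_1$. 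So some $a_i$ is either nonconstant affine or strictly convex. In both cases $h\circ a_i$ is strictly convex, because a strictly convex nondecreasing $h$ on $\R$ is strictly increasing. Your fallback of reading ``strictly g-convex $\rho$'' as ``$\ell_\rho$ strictly g-convex'' simply assumes the conclusion.
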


The above lemma applies when using the Kullback-Leibler penalty $\Pi_{KL}=\tr(\Sigma^{-1})+\log\{\det(\Sigma)\}$ or the Riemannian penalty  $\Pi_R(\Sigma)=||\log(\Sigma)||^2_F$, since in addition to being strictly convex in $\Sigma^{-1}$ and $\log(\Sigma)$ respectively, they are known to be strictly g-convex; see \cite{Tyler-Yi:2019} for details. 

As in the unpenalized case, the existence of a minimizer may depend on the data itself.  Other than being non-empty,
it is desirable for the solution set $\mA_{\rho}$ to be compact. The next lemma establishes that if these properties hold in
the well studied non-penalized case, $\eta = 0$, then they hold in the penalized case for any $\eta > 0$.
\begin{lemma} \label{Lem:Lexist}
Suppose $\rho(s)$ and $\Pi(\Sigma)$ are g-convex. If $\mA_{\rho}$ is a non-empty compact set when $\eta = \eta_o$, then it is a non-empty compact set when $\eta > \eta_o$. Furthermore, if either $\rho(s)$ or $\Pi(\Sigma)$ is strictly g-convex, the unique minimizer $\hsig_{\eta}>0$ is a continuous function of $\eta>\eta_o$.
\end{lemma}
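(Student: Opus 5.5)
The plan is to use the fact that, for a g-convex function on $\mP^p$, a non-empty compact minimizer set is equivalent to the function being coercive along geodesics. Coercivity here means that all sublevel sets are compact, i.e.\ $L_\rho(\Sigma;\eta)\to\infty$ as $\Sigma$ approaches the boundary of $\mP^p$. The boundary is reached when $\lambda_{\max}(\Sigma)\to\infty$ or $\lambda_{\min}(\Sigma)\to 0$. Equivalently, the Riemannian distance $d(\Sigma,I)=\|\log\Sigma\|_F\to\infty$.

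\textbf{Step 1 (g-convex functions with compact minimizer set are coercive).} Write $L_{\eta}(\Sigma)=\ell_\rho(\Sigma)+\eta\Pi(\Sigma)$, which is g-convex by Lemma \ref{Lem:sumgeo}. Suppose the minimizer set $\mA_\rho$ at $\eta_o$ is non-empty and compact. Fix $\Sigma_*\in\mA_\rho$ and a radius $r$ so large that $\mA_\rho$ lies in the geodesic ball $B(\Sigma_*,r)$. On the geodesic sphere $\partial B(\Sigma_*,r)$, which is compact, $L_{\eta_o}$ is continuous and strictly exceeds $m=\min L_{\eta_o}$. Hence its minimum there is $m+\delta$ for some $\delta>0$.

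For any $\Sigma$ with $d(\Sigma_*,\Sigma)=t>r$, take the geodesic from $\Sigma_*$ to $\Sigma$ and let $\Sigma_r$ be the point on it at distance $r$ from $\Sigma_*$. Convexity along this geodesic gives
\[
L_{\eta_o}(\Sigma)\ge m+\frac{t}{r}\,\delta .
\]
So $L_{\eta_o}$ grows at least linearly in $d(\Sigma_*,\Sigma)$, and in particular it is coercive. I will need continuity of $\ell_\rho$ and $\Pi$ for this; I will either assume it or use lower semicontinuity, which suffices for attainment of the minimum on the compact sphere.

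\textbf{Step 2 (pass to $\eta>\eta_o$).} For $\eta>\eta_o$ we have
\[
L_\eta = L_{\eta_o}+(\eta-\eta_o)\Pi \ge L_{\eta_o},
\]
because $\Pi\ge 0$. Therefore $L_\eta$ is coercive as well. Its sublevel sets are closed (by continuity) and bounded in the geodesic metric, hence compact. A lower semicontinuous coercive function attains its minimum, so $\mA_\rho$ at level $\eta$ is non-empty. It is also compact, being the closed subset $\{L_\eta\le \min L_\eta\}$ of a compact sublevel set. It is g-convex by Lemma \ref{Lem:sumgeo}.

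\textbf{Step 3 (continuity of $\hsig_\eta$).} Under strict g-convexity the minimizer is unique, by Lemma \ref{Lem:sumgeo}. Fix $\eta_1>\eta_o$ and let $\eta_k\to\eta_1$ with all $\eta_k\ge\eta'$ for some $\eta'\in(\eta_o,\eta_1)$.

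First I show the minimizers stay in a fixed compact set. We have
\[
L_{\eta_o}(\hsig_{\eta_k})\le L_{\eta_k}(\hsig_{\eta_k})\le L_{\eta_k}(\hsig_{\eta_1}) \le L_{\eta_o}(\hsig_{\eta_1})+(\sup_k\eta_k)\,\Pi(\hsig_{\eta_1}).
\]
This bound is uniform in $k$, so all $\hsig_{\eta_k}$ lie in a single sublevel set of $L_{\eta_o}$, which is compact by Step 1.

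Next, take any convergent subsequence $\hsig_{\eta_k}\to\tilde\Sigma$. Passing to the limit in $L_{\eta_k}(\hsig_{\eta_k})\le L_{\eta_k}(\Sigma)$ for every fixed $\Sigma$ uses joint continuity of $(\eta,\Sigma)\mapsto L_\eta(\Sigma)$, which holds since the dependence on $\eta$ is linear. The result is $L_{\eta_1}(\tilde\Sigma)\le L_{\eta_1}(\Sigma)$ for all $\Sigma$, so $\tilde\Sigma=\hsig_{\eta_1}$ by uniqueness. Since every subsequence has a further subsequence converging to $\hsig_{\eta_1}$, the whole sequence converges, which is continuity.

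The main obstacle is Step 1: making rigorous that compactness of the minimizer set forces coercivity. This needs the geodesic structure on $\mP^p$, namely geodesics $\Sigma_*^{1/2}\exp(tA)\Sigma_*^{1/2}$ and compactness of geodesic spheres, together with continuity of $\rho$. Since $\rho$ is convex in $\log s$, it is continuous on $(0,\infty)$, and I will assume that is enough for $\ell_\rho$ to be continuous on $\mP^p$.
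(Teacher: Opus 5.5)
Your proof is correct and takes essentially the paper's route. You transfer g-coercivity from $\eta_o$ to $\eta>\eta_o$ via $L_\rho(\Sigma;\eta)\ge L_\rho(\Sigma;\eta_o)$, using $\Pi\ge 0$, and you obtain continuity of $\hsig_\eta$ by confining the minimizers to a compact sublevel set of $L_\rho(\cdot;\eta_o)$ and then applying a subsequence-plus-uniqueness argument, which is the content of the paper's auxiliary Lemma~\ref{Lem:cont}. The only difference is that you prove the direction ``non-empty compact minimizer set $\Rightarrow$ g-coercive'' directly with your geodesic-sphere linear-growth argument, whereas the paper simply cites this equivalence from Appendix~\ref{App:gc}.
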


A sufficient condition to guarantee $\mA_{\rho}$ is a non-empty compact set when $\eta = 0$ is
\begin{equation} \label{eq:C}
 P_n(\mathcal{V}) < 1-\frac{\{p - \mbox{dim}(\mathcal{V})\}}{K_\rho} \ \mbox{for any proper subspace} \ \mathcal{V} \subset \R^p, 
\end{equation}
where $K_{\rho} = \sup\{ k > 0 ~|~ s^k e^{-\rho(s)} \to 0 \ \mbox{as} \ s \to \infty \} $ and $P_n$ represents the empirical
probability measure for the data set, see \cite{Kent-Tyler:1991}. The constant $K_{\rho}$ is referred to as the \emph{sill} of $\rho(s)$. For $K_{\rho} > p$ and $n > p$, this condition holds with probability one 
when random sampling from a continuous multivariate distribution. When differentiable, $\rho(s)$ is g-convex if and only if the function $\psi(s) = su(s)$ is non-decreasing. In this case, $K_{\rho} = \psi(\infty)$. The function $\psi(s)$ is related to the influence function of
the M-estimator, with $K_{\rho}$ being related to its gross error sensitivity \cite{Hampel:1968, Hampel:1974}. Smaller values of $K_{\rho}$ imply smaller gross error sensitivities and hence Condition $\ref{eq:C}$ is more
restrictive for more robust M-estimators.

When $\eta > 0$, condition \eqref{eq:C} can usually be relaxed. For example,  \cite{Sun:2014} studied a family of shrinkage Tyler's estimators and provided sufficient conditions, dependent on the sample, for the existence and uniqueness of the estimators when using a Kullback-Leibler-type penalty. For a general g-convex $\rho$-function that is bounded from below, \cite{Duembgen-Tyler:2016}  stated that when using some g-coercive penalty functions, like the Riemannian penalty $\Pi_R$ or the symmetrized Kullback-Leibler penalty $\Pi_{KLs}(\Sigma)=\tr(\Sigma)+\tr(\Sigma^{-1})$, no conditions on the sample are needed when $\eta > 0$ to ensure a unique minimizer exists. We show in the next theorem that in general if the penalty is g-coercive and under some other conditions, independent of the data, a unique minimum always exists for large enough $\eta$. This result holds even for $n=0$.

\begin{theorem}\label{thrm:largeeta0}
Suppose $\rho(s)$ is bounded below, as well as g-convex, and $\Pi(\Sigma)$ is g-convex and g-coercive. If $\log\{\det(\Sigma)\}/\Pi(\Sigma)\to 0$ as $\log\{\det(\Sigma)\}\to-\infty$, the solution set $\mathcal{A}_{\rho}$ is non-empty for $\eta>0$. If $\log\{\det(\Sigma)\}/\Pi(\Sigma)$ is bounded below by $-\eta_0$ as $\log\{\det(\Sigma)\}\to-\infty$, 
the solution set $\mathcal{A}_{\rho}$ is non-empty for $\eta>\eta_0$.
\end{theorem}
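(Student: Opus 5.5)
Since $\rho$ is bounded below, replace $\rho$ by $\rho - \inf\rho$, which changes $L_\rho$ only by a constant. Then $\ell_\rho(\Sigma) \ge \log\{\det(\Sigma)\}$ and
\[
L_\rho(\Sigma;\eta) \ge \log\{\det(\Sigma)\} + \eta\,\Pi(\Sigma).
\]
The plan is to show that this lower bound tends to $+\infty$ as $\Sigma$ leaves every compact subset of $\mP^p$. Coercivity of $L_\rho$ gives compact nonempty sublevel sets. Existence of a minimizer then follows from lower semicontinuity: $L_\rho$ is continuous since $\rho$, being convex in $\log s$, is continuous. Membership in $\mA_\rho$ follows.

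Write $\Sigma = \exp(A)$. G-coercivity of $\Pi$ means $\Pi(\Sigma)\to\infty$ as $\|\log\Sigma\|_F\to\infty$, which is exactly leaving compacts of $\mP^p$. Take any sequence $\Sigma_k$ with $\|\log\Sigma_k\|_F\to\infty$. Pass to subsequences and split into two cases.

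\emph{Case 1:} $\log\det\Sigma_k$ is bounded below. Then $L_\rho \ge c + \eta\Pi(\Sigma_k) \to \infty$ for every $\eta>0$.

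\emph{Case 2:} $\log\det\Sigma_k\to-\infty$. Write
\[
\log\det\Sigma_k + \eta\Pi(\Sigma_k) = \Pi(\Sigma_k)\bigl\{\eta + \log\det\Sigma_k/\Pi(\Sigma_k)\bigr\}.
\]
Under the first hypothesis the bracket tends to $\eta>0$. Under the second hypothesis the bracket is eventually at least $\eta-\eta_0-\epsilon>0$, choosing $\epsilon$ small for $\eta>\eta_0$. In both situations the product tends to $+\infty$, since $\Pi(\Sigma_k)\to\infty$.

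A subsequence argument handles mixed sequences: any sequence has a subsequence falling in Case 1 or Case 2. Hence $L_\rho(\Sigma_k)\to\infty$ along every escaping sequence, which is coercivity. Note that $\Pi\ge0$ and $\log\det\Sigma$ can grow to $+\infty$, but that only helps.

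The main obstacle is a subtle point about the hypothesis. It is stated ``as $\log\det\Sigma\to-\infty$'', which I would interpret uniformly: for every $\epsilon$ there is $M$ such that $\log\det\Sigma<-M$ implies the ratio exceeds $-\epsilon$ (respectively $-\eta_0-\epsilon$). I must also check that $\Pi(\Sigma_k)>0$ eventually in Case 2, which holds by coercivity. Finally, g-convexity of $\rho$ and $\Pi$ is not needed for existence itself; via Lemma~\ref{Lem:sumgeo} it gives g-convexity of the solution set. Compactness of $\mA_\rho$ comes for free, as a closed subset of a compact sublevel set.
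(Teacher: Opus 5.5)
Your proof is correct and is essentially the paper's argument. Both use the lower bound on $\rho$ to reduce to g-coercivity of $\log\{\det(\Sigma)\}+\eta\Pi(\Sigma)$, then split escaping sequences according to whether $\log\{\det(\Sigma_k)\}$ stays bounded below or tends to $-\infty$; your subsequence argument and separate treatment of the two hypotheses supply details the paper leaves implicit.

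One small correction: continuity of $L_\rho$ also requires continuity (or at least lower semicontinuity) of $\Pi$, which comes from its g-convexity, so your closing remark that g-convexity is not needed for existence is not quite right.
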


G-coercivity for $\Pi(\Sigma)$ may not hold for some penalties of interest. In particular, it cannot hold for any shape penalty, i.e. a scale invariant penalty for which $\Pi(\Sigma)=\Pi(c\Sigma)$ for any $c>0$. As noted within the proof of Theorem \ref{thrm:largeeta0}, the condition that $\log\{\det(\Sigma)\}+\eta\Pi(\Sigma)$ be g-coercive is sufficient to guarantee the existence of a minimizer. This implies the following more general result.


\begin{theorem} \label{thrm:largeeta} 
Suppose $\rho(s)$ is bounded below, as well as g-convex, and $\Pi(\Sigma)$ is g-convex. If $\log\{\det(\Sigma)\}+\eta\Pi(\Sigma)$ is g-coercive for some $\eta=\eta_o$,
then the solution set $\mathcal{A}_{\rho}$ is non-empty for any $\eta\ge\eta_0$.
\end{theorem}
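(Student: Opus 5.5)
We will show that for every $\eta \ge \eta_o$ the objective $L_\rho(\Sigma;\eta)$ is g-coercive and continuous on $\mP^p$; a minimizer then exists. Here g-coercive means $f(\Sigma)\to\infty$ whenever $\Sigma$ leaves every compact subset of $\mP^p$, equivalently $\|\log \Sigma\|_F\to\infty$. Given this, take any $\Sigma_1>0$. The sublevel set $\{\Sigma: L_\rho(\Sigma;\eta)\le L_\rho(\Sigma_1;\eta)\}$ is closed in $\mP^p$ by continuity (or lower semicontinuity) and bounded in the Riemannian sense by coercivity, hence compact. The minimum over this compact set is therefore attained, and it is a global minimizer, so $\mA_\rho\neq\emptyset$.

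The first step is to split the objective as
\[
L_\rho(\Sigma;\eta) = \n \rho(x_i^{\tran}\Sigma^{-1}x_i) + \bigl[\log\{\det(\Sigma)\} + \eta_o\Pi(\Sigma)\bigr] + (\eta-\eta_o)\Pi(\Sigma).
\]
Since $\rho$ is bounded below, say $\rho\ge m$, the first term is at least $m$ for every $\Sigma$; this holds uniformly in the data and even when $n=0$. The middle bracket is g-coercive by hypothesis. The last term is nonnegative, because $\Pi$ is a non-negative penalty and $\eta-\eta_o\ge 0$. Summing, $L_\rho(\Sigma;\eta)\ge m + \log\{\det(\Sigma)\}+\eta_o\Pi(\Sigma)$, which tends to $\infty$ as $\Sigma$ escapes compact sets. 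So $L_\rho(\cdot;\eta)$ is g-coercive. This is the same observation noted within the proof of Theorem \ref{thrm:largeeta0}, where g-coercivity of $\log\{\det(\Sigma)\}+\eta\Pi(\Sigma)$ suffices.

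The main obstacle is the regularity needed for the sublevel set to be closed, since $\Pi$ is only assumed g-convex and possibly non-smooth. The first thing to try is the fact that a finite-valued g-convex function on the Riemannian manifold $\mP^p$ is continuous. This is the geodesic analogue of the continuity of finite convex functions on open sets, and it can be obtained by restricting to small geodesic balls, which are geodesically convex. The same reasoning applies to $\rho$: $h(a)=\rho(e^a)$ is a finite convex function on $\R$, hence continuous, so $\Sigma\mapsto\rho(x_i^{\tran}\Sigma^{-1}x_i)$ is continuous. If one prefers to avoid this, lower semicontinuity of $\Pi$ can be assumed instead, and a minimizing sequence then yields a minimizer.

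Finally, I would remark on convexity. By Lemma \ref{Lem:sumgeo}, $L_\rho(\cdot;\eta)$ is g-convex, so the nonempty solution set $\mA_\rho$ is g-convex. It is also compact, because it is a sublevel set of a g-coercive continuous function. This is consistent with Lemma \ref{Lem:Lexist}, although that lemma is not needed for existence here.
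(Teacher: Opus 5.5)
Your proposal is correct and follows essentially the paper's route. The paper likewise reduces to g-coercivity of $\log\{\det(\Sigma)\}+\eta\Pi(\Sigma)$ using that $\rho$ is bounded below, and passes from $\eta_o$ to $\eta\ge\eta_o$ via monotonicity of $L_\rho$ in $\eta$ (from $\Pi\ge 0$, as in the proof of Lemma \ref{Lem:Lexist}). It then invokes the appendix facts that g-convex functions are continuous and that a g-convex, g-coercive function has a nonempty compact set of minima, which your explicit decomposition and sublevel-set compactness argument simply spell out.
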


The conditions of Theorem \ref{thrm:largeeta0} hold for the Kullback-Leibler penalty $\Pi_{KL}$, the symmetrized Kullback-Leibler penalty $\Pi_{KLs}$, and the Riemannian penalty $\Pi_R$, with $\eta_o=0$. They do not hold for the inverse precision matrix penalty $\Pi_o(\Sigma)=\tr(\Sigma^{-1})$, considered in \cite{Ollila-Tyler:2014}. The conditions of Theorem \ref{thrm:largeeta}, though, do hold for $\Pi_o(\Sigma)$ for any $\eta>0$.

The conditions of Theorem \ref{thrm:largeeta} are still too strong for shape penalties. This can be seen by considering the sequence $\Sigma_k=c_kI_p$, with $c_k\to 0$, where $I_p$ is the identity matrix of order $p$. Here $\log\{\det(\Sigma_k)\}+\eta\Pi(\Sigma_k)=p\log c_k+\eta\Pi(I_p)\to -\infty$ for any $\eta\ge 0$. For shape penalties, the following modification of the above theorem can be used to establish existence of the corresponding penalized M-estimators of scatter.

\begin{theorem}\label{thrm:generalexist}
Suppose $\rho(s)$ is bounded below, as well as g-convex, and $\Pi(\Sigma)$ is g-convex with $\Pi(\Sigma)=\Pi(c\Sigma)$ for any $c>0$. Also, for any sequence $\Gamma>0$, with $\gamma_1=1$ and $\gamma_p\to 0$, where $\gamma_1\ge\cdots\ge\gamma_p>0$ are the eigenvalues of $\Gamma$. Suppose $\log\{\det(\Gamma)\}+\eta\Pi(\Gamma)\to\infty$ for some $\eta=\eta_o$. Then, provided $n\ge 1, K_{\rho}>p$ and $P_n(0)<1-p/K_{\rho}$, the solution set $\mA_{\rho}$ is non-empty for any $\eta\ge\eta_o$.
\end{theorem}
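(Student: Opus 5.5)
The plan is to show that $L_\rho(\Sigma;\eta)$ is g-coercive on $\mP^p$, i.e.\ $L_\rho(\Sigma_k;\eta)\to\infty$ along any sequence $\Sigma_k$ leaving every compact subset of $\mP^p$. Continuity and g-convexity of $L_\rho$ then give a minimizer, exactly as in the proofs of Theorems \ref{thrm:largeeta0} and \ref{thrm:largeeta}.

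\textbf{Reduction to $\eta=\eta_o$.} Since $\Pi\ge 0$, we have $L_\rho(\Sigma;\eta)\ge L_\rho(\Sigma;\eta_o)$ for $\eta\ge\eta_o$. So it suffices to prove coercivity at $\eta=\eta_o$.

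\textbf{Scale--shape decomposition.} Write $\Sigma=c\,\Gamma$ with $c=\lambda_1(\Sigma)$, so that $\gamma_1=1$. Then
\[
L_\rho(\Sigma;\eta_o)=\n\rho(x_i^{\tran}\Gamma^{-1}x_i/c)+p\log c+\log\{\det(\Gamma)\}+\eta_o\Pi(\Gamma),
\]
using scale invariance of $\Pi$. A sequence leaves every compact set iff $c_k\to\infty$, or $c_k\to 0$, or $\gamma_{p,k}\to0$ (with $c_k$ bounded away from $0$ and $\infty$). I would pass to subsequences and treat these cases.

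\textbf{Case $c_k\to\infty$.} Here $\log\det\Gamma\le 0$ is controlled by the hypothesis: $\log\{\det(\Gamma)\}+\eta_o\Pi(\Gamma)$ is bounded below on $\{\gamma_1=1\}$. Indeed it tends to $+\infty$ as $\gamma_p\to0$ and is continuous on the compact part $\gamma_p\ge\delta$. Also $\rho$ is bounded below, so $L\ge \inf\rho+p\log c_k+\text{const}\to\infty$.

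\textbf{Case $\gamma_{p,k}\to0$ with $c_k$ bounded.} This is immediate from the hypothesis, again using $\rho$ bounded below.

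\textbf{Case $c_k\to0$.} This is the main obstacle, since $p\log c_k\to-\infty$ must be beaten by the data term. For $x_i\neq0$ we have $x_i^{\tran}\Gamma^{-1}x_i\ge\|x_i\|^2$ because $\gamma_1=1$. Hence $s_{ik}=x_i^{\tran}\Gamma_k^{-1}x_i/c_k\ge\|x_i\|^2/c_k\to\infty$.

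Choose $k_0$ with $p/(1-P_n(0))<k_0<K_\rho$. This is possible by the assumption $P_n(0)<1-p/K_\rho$, which also forces $P_n(0)<1$ and $n\ge1$. By the definition of the sill, $\rho(s)\ge k_0\log s-C$ for large $s$. Note that g-convexity makes $\rho(e^a)$ convex, so $\rho$ is eventually monotone and the sill gives a genuine linear lower bound in $\log s$. Therefore
\[
\n\rho(s_{ik})\ge (1-P_n(0))\,k_0\log(1/c_k)+(1-P_n(0))k_0\,\ave_{x_i\ne0}\log(x_i^{\tran}\Gamma_k^{-1}x_i)-C'.
\]
The second term is $\ge$ const because $x^{\tran}\Gamma^{-1}x\ge\|x\|^2$. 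The zero observations contribute $\rho(0)$, a constant.

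Adding $p\log c_k$ gives $\{(1-P_n(0))k_0-p\}\log(1/c_k)\to\infty$. Meanwhile $\log\det\Gamma+\eta_o\Pi(\Gamma)$ is bounded below as before. If $\gamma_{p,k}\to0$ simultaneously, that term only helps. Combining the cases gives coercivity and hence $\mA_\rho\neq\emptyset$.

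The delicate point is making the sill bound uniform and handling mixed sequences. I would handle this by passing to subsequences along which each of $c_k$ and $\gamma_{p,k}$ converges in $[0,\infty]$.
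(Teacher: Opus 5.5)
Your proposal is correct and follows essentially the same route as the paper: normalize by $\lambda_1$, use $x_i^{\tran}\Gamma^{-1}x_i\ge\|x_i\|^2$, and split into the cases $\lambda_1\to\infty$, $\lambda_1\to0$ (handled by the sill and the $P_n(0)$ condition) and $\gamma_p\to0$ (handled by the hypothesis on $\log\{\det(\Gamma)\}+\eta_o\Pi(\Gamma)$). Your version also refines the paper's argument in two places: it avoids the paper's use of $\rho$ being non-decreasing, which is not among the theorem's hypotheses but is used there to get $L_\rho\ge H_\rho(\lambda_1)+h(\Gamma;\eta)$, and it makes explicit that $\log\{\det(\Gamma)\}+\eta_o\Pi(\Gamma)$ is bounded below on $\{\gamma_1=1\}$, which the paper uses tacitly.
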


Examples of shape penalties which satisfy Theorem \ref{thrm:generalexist} are the Riemannian shape penalty $\Pi_{Rs}(\Sigma)=||\log(\Sigma/\det(\Sigma)^{1/p})||^2_F$, with the theorem holding for any $\eta>0$, and the non-smooth ``elasso" penalty $\Pi_E(\Sigma)=\sum_{i=1}^pa_i\log\lambda_i$, studied in \cite{Tyler-Yi:2018}, for  $\eta>\max\{-1/a_j, j=1,\cdots,p\}$, where $\lambda_1\ge\cdots\ge\lambda_p$ are eigenvalues of $\Sigma$ and $a_1\ge\cdots\ge a_p$ are fixed constants.

If we replace $\frac{1}{n}\sum_{i=1}^n\rho(x_i^{\tran}\Sigma^{-1}x_i)$ in \eqref{eq:mL} with $\frac{1}{n_1}\sum_{x_i\neq 0}\rho(x_i^{\tran}\Sigma^{-1}x_i)$, where $n_1=\#\{x_i\neq 0\}$, then Theorem \ref{thrm:generalexist} still holds if we replace the condition on $P_n(0)$ with the condition that $x_i\neq 0$ for some $x_i$. This is reasonable when we are only interested in the shape component, i.e.\ in functions of $\Sigma$ such that $H(\Sigma) = H(c\Sigma)$ for all $c > 0$, since $x_i=0$ has no information regarding shape. 

The above theorems do not hold for the regularized Tyler's M-estimators since $\rho(s)=plog(s)$ is not bounded from below. For this case, we can apply the following theorem.

\begin{theorem}\label{thrm:tylerexist}
Suppose $\rho(s)=plog(s)$, and $\Pi(\Sigma)$ is g-convex. Also, for any sequence $\Gamma>0$, with $\gamma_1=1$ and $\gamma_p\to 0$, where $\gamma_1\ge\cdots\ge\gamma_p>0$ are the eigenvalues of $\Gamma$. Suppose $\log\{\det(\Gamma)\}+\eta\Pi(\Sigma)\to\infty$ for some $\eta=\eta_o$. Then, the solution set $\mA_{\rho}$ is non-empty for any $\eta\ge\eta_o$.
\end{theorem}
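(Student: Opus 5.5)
The plan is to exploit the scale invariance of the Tyler loss. Writing $\ell^{Tyler}(\Sigma)=\frac{p}{n}\sum_{i}\log(x_i^{\tran}\Sigma^{-1}x_i)+\log\{\det(\Sigma)\}$, one checks directly that $\ell^{Tyler}(c\Sigma)=\ell^{Tyler}(\Sigma)$ for all $c>0$. So I would parametrize $\Sigma=c\Gamma$ with $c=\lambda_1(\Sigma)>0$ and $\Gamma>0$ having largest eigenvalue $\gamma_1=1$. Then $L_\rho(\Sigma;\eta)=\ell^{Tyler}(\Gamma)+\eta\Pi(\Sigma)$, and the data enter only through $\Gamma$.

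The first key step is a data-free lower bound for $\ell^{Tyler}(\Gamma)$ on this normalized set. Since $\gamma_1=1$, we have $\Gamma^{-1}\ge I_p$, hence $x_i^{\tran}\Gamma^{-1}x_i\ge\|x_i\|^2$. Therefore
\[
L_\rho(\Sigma;\eta)\ \ge\ C_n+\log\{\det(\Gamma)\}+\eta\Pi(\Sigma),\qquad C_n=\frac{p}{n}\sum_{i=1}^n\log\|x_i\|^2 .
\]
This requires $x_i\neq0$. Zero observations carry no shape information and would make the loss $-\infty$, so, as in the remark after Theorem \ref{thrm:generalexist}, I would restrict the sum to the nonzero $x_i$. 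I would also use $\Pi\ge0$ to reduce the case $\eta\ge\eta_o$ to $\eta=\eta_o$: $\eta\Pi(\Sigma)\ge\eta_o\Pi(\Sigma)$, so any divergence to $+\infty$ established at $\eta_o$ persists for larger $\eta$.

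Next I would show that the sublevel sets $\{\Sigma>0: L_\rho(\Sigma;\eta)\le L_\rho(I_p;\eta)\}$ are compact. Take a sequence $\Sigma_k$ leaving every compact subset of $\mP^p$, write $\Sigma_k=c_k\Gamma_k$, and pass to subsequences. If $\gamma_{p,k}\to0$, the hypothesis gives $\log\{\det(\Gamma_k)\}+\eta_o\Pi(\Sigma_k)\to\infty$, so $L_\rho(\Sigma_k;\eta)\to\infty$ by the bound above. Otherwise $\Gamma_k$ stays in a compact set, so $\ell^{Tyler}(\Gamma_k)$ is bounded, and necessarily $c_k\to0$ or $c_k\to\infty$. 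Here I would need the hypothesis to control $\Pi$ along pure scale changes: either $\Pi$ is a shape penalty, in which case one works on the quotient and fixes $c=1$, or $\Pi(c_k\Gamma_k)\to\infty$, which is what g-coercivity in the scale direction gives.

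Once the sublevel sets are compact, existence follows because $L_\rho$ is finite and g-convex, hence continuous, on $\mP^p$, so it attains its minimum on a compact nonempty sublevel set. I expect the main obstacle to be exactly this scale-direction case. The hypothesis as written involves $\Gamma$ only through $\log\{\det(\Gamma)\}$ and $\Pi(\Sigma)$, so I would first read it as requiring divergence along every degenerating sequence $\Sigma_k=c_k\Gamma_k$. In the shape-penalty case I would instead reduce to the compact slice $\{\gamma_1=1,\ \gamma_p\ge\delta\}$ and conclude there.
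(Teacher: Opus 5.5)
This is essentially the paper's proof: it sets $\Gamma=\Sigma/\lambda_1$, derives exactly your bound $L_{\rho}(\Sigma;\eta)\ge\frac{p}{n}\sum_{i=1}^n\log(x_i^{\tran}x_i)+\log\{\det(\Gamma)\}+\eta\Pi(\Sigma)$, and then says the result follows ``by the similar proof of Theorem~\ref{thrm:largeeta0}''. That deferred step tacitly borrows g-coercivity of $\Pi$ from Theorem~\ref{thrm:largeeta0} to handle pure rescalings $c_k\to0,\infty$, which the stated hypothesis does not control---e.g.\ $\Pi(\Sigma)=\Pi_{Rs}(\Sigma)+\det(\Sigma)^{-1}$ is g-convex and satisfies it for every $\eta_o>0$, yet $L_{\rho}(c\Sigma;\eta)$ strictly decreases in $c$ for $\eta>0$, so $\mA_{\rho}=\emptyset$---hence your explicit handling of the scale direction (strengthened divergence condition, or the slice $\gamma_1=1$ for shape penalties) and of zero observations is exactly what is needed to make the argument correct.
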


\subsection{Bayesian interpretation}\label{Sec:Bayes}

 In regression, adding an $L_2$ penalty is equivalent to using a Gaussian prior on $\beta$, while adding an $L_1$ penalty is equivalent to using a Laplacian prior, i.e.
$$\beta\sim C\exp^{-\eta\beta^2} \quad \mbox{and} \quad \beta\sim C\exp^{-\eta|\beta|} \quad respectively.$$
Thus the ridge and lasso estimators of $\beta$ \cite{Tibshirani:1996} can be interpreted as the Bayesian posterior mode under these respective priors; see \cite{Li-Goel:2006} for a comprehensive review. Similarly, the penalized objective function \eqref{eq:mL} can be related to the
Bayesian posterior
$$\det(\Sigma)^{-n/2}\prod_{i=1}^ng(x_i^{\tran}\Sigma^{-1}x_i)\cdot \exp(-\eta(\Pi(\Sigma))).$$
When $\Pi(\Sigma)$ is orthogonally invariant, the prior, $\exp\{-\eta\Pi(\Sigma)\}$, corresponds to independent priors for $P$, and $\lambda$, where $\Sigma=P\Lambda P^{\tran}$ represents its spectral value decomposition, with $P$ having a Haar measure on $\bigO(p)$, and $\lambda \in \R^p$, which are the diagonal elements of $\Lambda$, having a density proportional to 
$\exp\{-\eta\pi(\log(\lambda))\}$. 

Since $L_{\rho}$ is g-convex, the corresponding posterior marginal density of $\Sigma$, say $f(\Sigma)$, is also g-convex. Thus its superlevel sets $\{\Sigma \in \mathcal{P}^p \,|\, f(\Sigma)\ge t\}$ are g-convex as well for any $t > 0$. As defined in \cite{Anderson:1955}, a multivariate function $h(x)$ is \emph{unimodal} if its epigraph $\{x\in\R^n\,|\,h(x)\ge c\}$ is convex for every real number $c$. This notion can be generalized to the space of positive definite symmetric matrix of order $p$. That is, the condition of unimodality of a matrix distribution function can be expressed by the condition that its superlevel sets are g-convex. Therefore, the posterior marginal density of $\Sigma$ is unimodal if the density generator function $g$ is monotonically non-increasing. Examples of unimodal elliptical density include normal, $t$-distribution, angular Gaussian distribution, and logistic distribution. 

Analogous to \cite{Anderson:1955} and \cite{Khatri:1967}, this generalization has applications to the concentration of matrix-variate distributions and the coverage probability of simultaneous confidence intervals for the components of the covariance matrix of a matrix-variate population. For example, if $X\sim N(0, \Sigma_1)$ and $Y\sim N(0, \Sigma_2)$ with $\Sigma_2-\Sigma_1\ge 0$, where $X, Y\in\R^p$. Then the distribution of $X$ is more concentrated about $0$ than $Y$, in the sense that$$P[Y\in \mathcal{J}]\le P[X\in \mathcal{J}]$$for every symmetric g-convex set $\mathcal{J}\in\R^p$.

\section{Reweighting algorithms}\label{Sec:Alg}
When $\rho(s)$ and $\Pi(\Sigma)$ are both differentiable then the minima of $L_{\rho}(\Sigma;\eta)$ satisfy the M-estimating equation:
\begin{equation*} \label{eq:meq}
\Sigma=\frac{1}{n}\sum_{i=1}^n\{u(x_i^{\tran}\Sigma^{-1}x_i)x_ix_i^{\tran}\}+\eta\nabla\Pi(\Sigma^{-1}),
\end{equation*}
where $u(s) = \rho^{\prime}(s)$ and $\nabla\Pi(\Sigma^{-1})$ denotes the derivative with respect to $\Sigma^{-1}$. A common way to find a solution to an M-estimating
equation is to then use a fixed-point (FP) algorithm:
\begin{equation}\label{eq:mq}
\Sigma_{k+1}=\frac{1}{n}\sum_{i=1}^n\{u(x_i^{\tran}\Sigma^{-1}_k x_i)x_ix_i^{\tran}\}+\eta\nabla\Pi(\Sigma_k^{-1}).
\end{equation}

For the unpenalized case, i.e.\ $\eta=0$, and when equation (\ref{eq:mequation}) corresponds to the MLE from a scale mixture of normals, such as a multivariate $t_{\nu}$-distribution, the FP algorithm arises from an application of a standard EM algorithm, properties of which have been well studied. In particular, the FP algorithm is thus monotonically decreasing in $\ell_{\rho}(\Sigma)$, meaning  $\ell_{\rho}(\Sigma_{k+1}) \le \ell_{\rho}(\Sigma_k)$. For \emph{monotonic} M-estimates,  i.e.\ when $\psi(s) = su(s)$ is non-decreasing, when $\rho(s)$ is unbounded and not necessarily related to a scale mixture of normals, \cite{Kent-Tyler:1991} show that the FP algorithm always converges to the unique solution, no matter the choice of the initial value. These conditions are satisfied, for example, by Huber's M-estimators of scatter, even though they do not correspond to MLEs based on a scale mixture of normals. Furthermore, it was later shown in  \cite{Arslan:2004} that for this more general case the FP algorithm is monotonically increasing in the corresponding M-objective function.

For the penalized case, \cite{Ollila-Tyler:2014} gives a partial proof that the fixed-point algorithm converges when using the Kullback-Leibler penalty or the trace precision matrix penalty $\Pi_o(\Sigma)=\tr(\Sigma^{-1})$. However, there is no general proof for other penalties. The following example shows that this algorithm doesn't always converge for the Riemannian penalty. 

\begin{example}\label{ex:1}
We consider a sample $x=\{x_1, \cdots, x_{100}\}\in\R^2$ generated from a mean-zero elliptical $t_3$-distribution, with $\Sigma$ being the identity matrix. We use the weight function generated from the $t_3$ distribution, i.e., $u(s_i)=\frac{\nu+p}{\nu+s_i}$ in equation (\ref{eq:mq}), where $s_i=x_i^{\tran}\Sigma^{-1}_kx_i, \nu=3$ and $p=2$.
 
First, consider the penalty $\Pi_o(\Sigma)=\tr(\Sigma^{-1})$. The fixed-point algorithm is given by
\[
\Sigma_{k+1}\leftarrow \frac{\nu+p}{n}\sum_{i=1}^n\frac{x_ix_i^{\tran}}{\nu+x_i^{\tran}\Sigma^{-1}_kx_i}+2\eta I_p
\]
For a fixed $\eta=0.5$, we compute $\hat{\Sigma}_{\eta}$ with  two different initial values given, respectively, by
\[
\Sigma_{00}=\begin{bmatrix}
1&0\\0&1\end{bmatrix}, \,\,\,\text{and}\,\,\,\Sigma_{01}=\begin{bmatrix}
3&1\\1&3\end{bmatrix}.
\]
As expected, the fixed-point algorithm converges for both of the initial values. In both cases, convergence occurred after five steps and gives the estimate:
\[
\hat{\Sigma}_{\eta}=\begin{bmatrix}
0.5136 & -0.0072\\
-0.0072 & 0.5038
\end{bmatrix}.
\]

Next, consider the Riemannian penalty $\Pi_R(\Sigma)=\Vert\log\Sigma\Vert^2_F$, for which the fixed-point algorithm is given by:
\[
\Sigma_{k+1}\leftarrow \frac{\nu+p}{n}\sum_{i=1}^n\frac{x_ix_i^{\tran}}{\nu+x_i^{\tran}\Sigma^{-1}_kx_i}-2\eta\log(\Sigma_k)\Sigma_k.
\]
For $\eta=0.5$, we use this algorithm with the same data and the same two initial values as before.
\begin{itemize}
\item[(i)] With $\Sigma_{00}$, it converges at the $38th$ step and gives an estimate $$\hsig_{\eta}=\begin{bmatrix}
0.9758 & -0.0063\\ -0.0063 & 1.1291\end{bmatrix}.$$
\item[(ii)] With $\Sigma_{01}$, it fails to converge.
\end{itemize}

\end{example}

In general, the minima of M-optimization problem \eqref{eq:mL} do not have closed form solutions. We show
here, though, that if one knows how to minimize \eqref{eq:nL}, the normal optimization problem, then one can use a simple reweighting algorithm
to find a minimum to \eqref{eq:mL}, whenever $\rho(s)$ is concave as well as g-convex.
For such cases, the weight function $u(s) = \rho^\prime(s)$ is non-increasing, while the ``influence function'' $\psi(s) = su(s)$ is
non-decreasing. These are the original conditions imposed on the weight and influence functions for the multivariate M-estimators \cite{Maronna:1976}. Examples which satisfies these conditions include the $u(s)$ and $\psi(s)$ functions associated with the sample covariance matrix, the maximum likelihood estimators of scatter under an elliptical $t$-distribution on $\nu$ degrees of freedom, and Huber's M-estimator of scatter.

A first order approximation to $\ell_\rho(\Sigma)$ yields the following monotonic algorithm.
\begin{theorem} \label{Thrm:alg} Suppose $\rho(s)$ is concave and differentiable with $u(s) = \rho^\prime(s)$,
and let $L(\Sigma;M;\eta)$ being defined as in \eqref{eq:nL}. Given an initial $\Sigma_0 > 0$, define the
sequence 
\[ \Sigma_{k+1} = \argmin_{\Sigma > 0} L(\Sigma; M(\Sigma_k);\eta), \ \mbox{where} \
M(\Sigma) = \frac{1}{n}\sum_{i=1}^n u(x_i^{\tran} \Sigma^{-1} x_i) x_i x_i^{\tran}. \]  
Then $L_{\rho}(\Sigma_{k+1};\eta) \le L_{\rho}(\Sigma_k;\eta)$,  i.e.\ $L_{\rho}(\Sigma_k;\eta)$ is a non-increasing sequence.
\end{theorem}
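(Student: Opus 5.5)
This is a majorization--minimization argument built on the concavity of $\rho$. The plan is to show that $L(\Sigma; M(\Sigma_k);\eta)$, up to an additive constant depending only on $\Sigma_k$, majorizes $L_\rho(\Sigma;\eta)$ and touches it at $\Sigma=\Sigma_k$. Monotonicity then follows from the usual MM sandwich.

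\emph{Tangent-line bound.} Since $\rho$ is concave and differentiable on $\R^+$, for any $s,t \ge 0$ we have
\[
\rho(s) \le \rho(t) + u(t)(s-t).
\]
Apply this with $s_i = x_i^{\tran}\Sigma^{-1}x_i$ and $t_i = x_i^{\tran}\Sigma_k^{-1}x_i$, and average over $i$. Using
\[
\frac{1}{n}\sum_{i=1}^n u(t_i)\, x_i^{\tran}\Sigma^{-1}x_i = \tr\{\Sigma^{-1}M(\Sigma_k)\},
\]
this gives
\[
\frac{1}{n}\sum_{i=1}^n \rho(x_i^{\tran}\Sigma^{-1}x_i) \le \tr\{\Sigma^{-1}M(\Sigma_k)\} + c_k,
\qquad
c_k = \frac{1}{n}\sum_{i=1}^n \{\rho(t_i) - u(t_i)t_i\}.
\]
The constant $c_k$ does not depend on $\Sigma$, and equality holds at $\Sigma = \Sigma_k$.

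\emph{Majorizer.} Adding $\log\{\det(\Sigma)\} + \eta\Pi(\Sigma)$ to both sides yields, for all $\Sigma>0$,
\[
L_\rho(\Sigma;\eta) \le L(\Sigma; M(\Sigma_k);\eta) + c_k,
\]
again with equality at $\Sigma=\Sigma_k$.

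\emph{Chaining.} Since $\Sigma_{k+1}$ minimizes $L(\cdot\,; M(\Sigma_k);\eta)$,
\[
L_\rho(\Sigma_{k+1};\eta) \le L(\Sigma_{k+1}; M(\Sigma_k);\eta) + c_k \le L(\Sigma_k; M(\Sigma_k);\eta) + c_k = L_\rho(\Sigma_k;\eta).
\]
This is the claimed non-increase.

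The inequalities themselves are routine. The main obstacle is that the argmin defining $\Sigma_{k+1}$ must exist. Only this is essential, since a non-unique argmin can be handled by taking any minimizer, and the chain of inequalities still holds.

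Existence needs a positive-definite-or-penalty-compensated $M(\Sigma_k)$. Concavity of $\rho$ together with g-convexity gives $u \ge 0$, so $M(\Sigma_k)$ is positive semidefinite. Existence of the minimizer of the normal penalized problem \eqref{eq:nL} with a semidefinite $M$ is then covered by the conditions of Theorems~\ref{thrm:largeeta0}--\ref{thrm:largeeta}, applied with $\rho(s)=s$, which is bounded below and g-convex, or by the results of \cite{Tyler-Yi:2018,Tyler-Yi:2019}. I would state this existence as a standing assumption implicit in the theorem's definition of the sequence, and note that when $\Pi$ is strictly g-convex the argmin is also unique by Lemma~\ref{Lem:sumgeo}.
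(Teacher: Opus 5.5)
Your proof is correct and is essentially the paper's own argument. The paper uses the same concavity tangent-line bound $\rho(s)\le\rho(s_o)+(s-s_o)u(s_o)$ to get $\ave\{\rho(x_i^{\tran}\Sigma_{k+1}^{-1}x_i)\}\le\ave\{\rho(x_i^{\tran}\Sigma_k^{-1}x_i)\}+\tr\{(\Sigma_{k+1}^{-1}-\Sigma_k^{-1})M(\Sigma_k)\}$, then combines it with $L(\Sigma_{k+1};M(\Sigma_k);\eta)\le L(\Sigma_k;M(\Sigma_k);\eta)$. This is exactly your majorize-then-minimize chain (the paper's Remark 1 makes the same MM reading), and your treatment of existence of the argmin as an implicit standing assumption matches the paper, which also takes it for granted.
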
 

For the unpenalized case, $\Sigma_{k+1} = M(\Sigma_k)$, and so this reweighting algorithm coincides with the original fixed-point algorithm proposed in \cite{Maronna:1976} and \cite{Huber:1977}. When using the Kullback-Leibler penalty or the trace precision matrix penalty this algorithm can be shown to be equivalent to the FP algorithm. 
The algorithm can also be used with non-smooth penalties. In particular, it readily applies to the non-smooth elasso penalty $\Pi_E(\Sigma)$. This application is facilitated by the finite-step algorithm presented in \cite{Tyler-Yi:2018}, which can efficiently find the minimum of the normal objective $L(\Sigma; M(\Sigma_k);\eta)$. Note that a special case of the elasso penalty is the log-condition number penalty $\Pi_{cn}(\Sigma)=\log(\lambda_1)-\log(\lambda_p)$.

To establish the convergence of the sequence $\Sigma_k$, the solution set $\mA _{\rho}$ needs to be non-empty and compact.
We then have the following convergence theorem.

\begin{theorem} \label{Thrm:converge} Let $\rho(s)$ be concave and g-convex, and $\Pi(\Sigma)$ be g-convex. 
If the solution set $\mA_{\rho}$ as defined by \eqref{eq:sol} is non-empty and compact, then
the sequence $\Sigma_k$ has an accumulation point, with any accumulation point, say $\Sigma_{\infty}$, in $\mA_{\rho}$.
Furthermore, if either $\rho(s)$ or $\Pi(\Sigma)$ is strictly g-convex, then $\mA_{\rho}$ contains
exactly one element, say $\Sigma_{\infty}$, with $\Sigma_k \to \Sigma_{\infty}$.
\end{theorem}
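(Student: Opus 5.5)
We follow the standard MM (majorize--minimize) convergence template, combined with g-convexity of $L_\rho$. The first step is to record the majorization behind Theorem~\ref{Thrm:alg}. Concavity of $\rho$ gives $\rho(s)\le \rho(t)+u(t)(s-t)$. Hence
\[
L_\rho(\Sigma;\eta)\le G(\Sigma\mid\Sigma_k):=L(\Sigma;M(\Sigma_k);\eta)+c_k ,
\]
where $c_k=\frac1n\sum_i\{\rho(t_i)-u(t_i)t_i\}$ with $t_i=x_i^{\tran}\Sigma_k^{-1}x_i$, and equality holds at $\Sigma=\Sigma_k$. Thus $L_\rho(\Sigma_{k+1};\eta)\le G(\Sigma_{k+1}\mid\Sigma_k)\le G(\Sigma_k\mid\Sigma_k)=L_\rho(\Sigma_k;\eta)$. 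We also need the minimizer of $L(\cdot;M(\Sigma_k);\eta)$ to exist. Since $G\ge L_\rho$ and the level sets of $L_\rho$ are compact, the level set $\{G(\cdot\mid\Sigma_k)\le G(\Sigma_k\mid\Sigma_k)\}$ is compact, so the minimizer exists.

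Next comes compactness of the iterates. A g-convex function whose set of minimizers is non-empty and compact has all sublevel sets bounded in the Riemannian sense. This is the g-convex analogue of the Euclidean fact: along a geodesic ray leaving a large geodesic ball around a minimizer, the function grows at least linearly. Hence $\{\Sigma: L_\rho(\Sigma;\eta)\le L_\rho(\Sigma_0;\eta)\}$ is a compact subset of $\mP^p$. By monotonicity, every $\Sigma_k$ lies in this set, so accumulation points exist and are positive definite. The sequence $L_\rho(\Sigma_k;\eta)$ decreases to some limit $L^*$.

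The main step is to show that any accumulation point $\Sigma_\infty$ lies in $\mA_\rho$. Take a subsequence $\Sigma_{k_j}\to\Sigma_\infty$. The map $\Sigma\mapsto M(\Sigma)$ is continuous, and the minimizers $\Sigma_{k_j+1}$ stay in the compact sublevel set, so we may pass to a further subsequence with $\Sigma_{k_j+1}\to\Sigma'$. For any fixed $\Sigma>0$,
\[
G(\Sigma_{k_j+1}\mid\Sigma_{k_j})\le G(\Sigma\mid\Sigma_{k_j}).
\]
If the penalty is lower semicontinuous and the rest is continuous (the penalty is g-convex, hence continuous on $\mP^p$), we can pass to the limit and obtain $\Sigma'\in\argmin G(\cdot\mid\Sigma_\infty)$. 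Moreover $L^*=L_\rho(\Sigma')\le G(\Sigma'\mid\Sigma_\infty)\le G(\Sigma_\infty\mid\Sigma_\infty)=L_\rho(\Sigma_\infty)=L^*$, so $\Sigma_\infty$ itself minimizes the surrogate $G(\cdot\mid\Sigma_\infty)$.

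The remaining and hardest step is to show that a minimizer of the surrogate at its own tangency point minimizes $L_\rho$. Since $\Pi$ may be non-smooth, we argue through one-sided directional derivatives along geodesics. Fix any $\Sigma$ and the geodesic $\Sigma_t=\Sigma_\infty^{1/2}(\Sigma_\infty^{-1/2}\Sigma\Sigma_\infty^{-1/2})^t\Sigma_\infty^{1/2}$. The smooth parts of $G(\cdot\mid\Sigma_\infty)$ and $L_\rho$ have the same derivative at $t=0^+$, because the term $\frac1n\sum_i u(t_i)\,x_i^{\tran}\Sigma_t^{-1}x_i$ matches $\frac1n\sum_i\rho(x_i^{\tran}\Sigma_t^{-1}x_i)$ to first order. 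Both functions also share the term $\log\det\Sigma_t+\eta\Pi(\Sigma_t)$, whose one-sided derivative exists by g-convexity. Hence the directional derivatives of $L_\rho$ and $G$ at $\Sigma_\infty$ coincide, and the latter is $\ge0$ because $\Sigma_\infty$ minimizes $G$. A g-convex function with nonnegative one-sided derivative along every geodesic from a point is minimized there, so $\Sigma_\infty\in\mA_\rho$.

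Finally, under strict g-convexity Lemma~\ref{Lem:sumgeo} gives $\mA_\rho=\{\Sigma_\infty\}$. A bounded sequence all of whose accumulation points equal $\Sigma_\infty$ converges to it, so $\Sigma_k\to\Sigma_\infty$.
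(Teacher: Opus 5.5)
Your proposal is correct and follows the paper's proof essentially step for step. The shared steps are: compactness of the sublevel set $\{L_\rho\le L_\rho(\Sigma_0;\eta)\}$ via g-coercivity; a joint subsequence of $(\Sigma_{k},\Sigma_{k+1})$ showing that $\Sigma_\infty$ minimizes its own surrogate $L(\cdot;M(\Sigma_\infty);\eta)$, where your sandwich inequality replaces the paper's remark that \eqref{eq:Lle} would otherwise be strict; and transferring the nonnegative one-sided geodesic derivative to $L_\rho$ through the differentiable remainder $Q$. You also justify two points the paper leaves implicit, namely existence of the surrogate minimizer (via $G\ge L_\rho$) and closedness of the argmin map (via continuity of $M$ and lower semicontinuity of $\Pi$), which tightens the argument without changing its route.
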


\begin{example}\label{ex:2}
To illustrate our new algorithm, consider the same data set used in Example \ref{ex:1}. Recall that when using the $\Pi_o(\Sigma)$ penalty, our algorithm corresponds to the fixed-point algorithm. Unlike the Kullback-Liebler and $\Pi_o(\Sigma)$ penalty, the normal minimization problem $\Sigma_{k+1} = \argmin_{\Sigma > 0} L(\Sigma; M(\Sigma_k);\eta)$ generally does not have a closed-form solution for other penalty functions. For the Riemannian penalty $\Pi_R(\Sigma)$, however, a simple iterative algorithm is provided in \cite{Tyler-Yi:2019}. Applying our proposed algorithm with $\eta=0.5$ and the same two initials values from Example 1, we observe convergence for both $\Sigma_{00}$ and $\Sigma_{01}$. The algorithm converges at the $12th$ and $14th$ iterations, respectively, reaching the value of $\hsig_{\eta}$ given in Example \ref{ex:1}.
\end{example}

\begin{itemize}
    \item \emph{Remark 1.} Our algorithm can also be interpreted within the majorization-minimization (MM) framework, where $L(\Sigma;M(\Sigma_k);\eta)$, the normal objective function, serves as a surrogate function that locally approximates the M-objective function $L_{\rho}(\Sigma;\eta)$ with their difference minimized at the current point. In other words, the M-objective function $L_{\rho}(\Sigma;\eta)$ is upper-bounded by the surrogate function up to a constant, that is, $L_{\rho}(\Sigma;\eta)\le L(\Sigma;M(\Sigma_k);\eta)+\frac{1}{n}\sum_{i=1}^n\rho(x_i^{\tran}M^{-1}(\Sigma_k)x_i)$. However, unlike traditional MM approaches such as those in \cite{Sun:2014} and \cite{Wiesel:2012}, which require deriving problem-specific surrogate functions for each combination of likelihood and penalty--limiting their applicability to Tyler's loss and Kullback-Leibler type penalties--our re-weighting algorithm provides a unified framework. We establish global convergence for a broad class of loss functions $\rho$ and penalty functions $\Pi(\Sigma)$, without requiring case-by-case surrogate function construction. This generality enables practitioners to apply our method to diverse robust estimation problems without redeveloping the theoretical machinery for each new objective function.
\item \emph{Remark 2.}  The conditions in the above two theorems apply to the Kullback-Leibler and the trace precision matrix penalties. As already noted, our proposed re-weighting algorithm is equivalent to the FP algorithm when using either of these two penalties.  Theorem \ref{Thrm:alg} then implies the FP algorithm given in \cite{Ollila-Tyler:2014} is monotonic. It was also noted that only a partial proof is given in \cite{Ollila-Tyler:2014} for the convergence of their FP algorithm. Theorem \ref{Thrm:converge} then provides a complete proof. 
\end{itemize}

\section{Constrained optimization}\label{Sec:constraint}
It has been noted that in some applications, covariance matrices may possess certain structures, such as Banded matrices, which are associated with time-varying moving average models, as studied in \cite{Bickel-Levina:2008}, or Toeplitz matrices, which are used to model the correlation of cyclostationary processes in periodic time series, see \cite{Miller-Snyder:1987}. In such cases, one may wish to consider a structured maximum likelihood estimator or a structured M-estimator of scatter, defined as:
\begin{equation}\label{eq:cL}
    \min_{\Sigma > 0} \ell_{\rho}(\Sigma), \quad \text{s.t.} \quad \Sigma\in\MM,
\end{equation}
where $\MM\subset\mP^p$ is a constraint set that characterizes the covariance structure. When the constraint set corresponds to Banded or Toeplitz matrix classes, these form convex subsets of the positive semidefinite cone. Consequently, if $\ell_\rho(\Sigma)$ is the Gaussian negative log-likelihood, standard convex optimization methods can be used to solve the constrained problem \eqref{eq:cL}. For Tyler's likelihood under structural constraints, \cite{Sun:2016} applied MM algorithm with convex surrogate functions but provided no convergence guarantees--establishing neither local nor global convergence. Furthermore, they did not address well-posedness, leaving unresolved when solutions exist and are unique. Beyond convex structures, important classes of covariance matrices such as Kronecker matrices \cite{Srivastava:2008} lack standard convexity. Additionally, as already noted $\ell_\rho(\Sigma)$ is not convex in both $\Sigma$ and $\Sigma^{-1}$ for most M-estimators of scatter of interest. However, the class of Kronecker matrices is geodesic convex. We therefore establish well-posedness for the constrained problem and extend our penalized algorithm to handle constraints whenever $\MM$ forms a geodesically convex set. 

Typically, constrained optimization problems \eqref{eq:cL} do not have closed form solutions and need to be solved using iterative numerical techniques. According to optimization theory, if both the objective function and the constraint set are convex, then due to the Lagrange duality, a constrained optimization problem is mathematically equivalent to a penalized optimization problem. Similarly, we note that under g-convexity, results for the penalized likelihood or M-estimation framework \eqref{eq:mL} readily apply to the constrained likelihood or M-estimation framework \eqref{eq:cL} due to the following duality between the two settings. 

\begin{theorem}\label{thrm:duality}
	Suppose $L_{\rho}(\Sigma;\eta)$ has a unique minimizer $\hat{\Sigma}_{\eta}>0$ which is continuous as a function of $\eta$. Define $\kappa_L=\inf\{\Pi(\Sigma) \,|\, \Sigma>0\}$ and $\kappa_U=\sup\{\Pi(\Sigma)\,|\,\eta>\eta_o\}$, where $\eta_o$ satisfied the conditions for Theorem \ref{thrm:largeeta0} or Theorem \ref{thrm:largeeta}. For $\kappa_L<\kappa\le\kappa_U$, there exists a unique solution $\tilde{\Sigma}_{\kappa}>0$ to the problem $\arg\min\{\ell_{\rho}(\Sigma) \,|\, \Sigma>0, \Pi(\Sigma)\le\kappa\}$, with the solution $\tilde{\Sigma}_{\kappa}$ being a continuous function of $\kappa$. Furthermore, for each $\eta>\eta_o$ there exists a $\kappa>0$, and vice versa, such that $\hat{\Sigma}_{\eta}=\tilde{\Sigma}_{\kappa}$. The relationship between $\eta$ and $\kappa$ is given by $\kappa(\eta)=\Pi(\hat{\Sigma}_{\eta})$ for $\eta > \eta_o$.
\end{theorem}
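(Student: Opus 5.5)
Our plan is to transfer the standard Lagrangian duality argument from the convex setting to the geodesically convex one. The key structural facts are that the sublevel set $\{\Sigma>0 \mid \Pi(\Sigma)\le\kappa\}$ is g-convex, since $\Pi$ is g-convex, and that $\ell_\rho$ is g-convex. Strict g-convexity of $L_\rho$, which is implicit in the uniqueness hypothesis and comes from Lemma~\ref{Lem:sumgeo} when $\rho$ or $\Pi$ is strictly g-convex, gives uniqueness of minimizers.

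\textbf{Step 1: monotonicity.} We first establish that $\kappa(\eta)=\Pi(\hsig_\eta)$ is non-increasing in $\eta$ and that $\ell_\rho(\hsig_\eta)$ is non-decreasing. For $\eta_1<\eta_2$, add the two optimality inequalities
\[
L_\rho(\hsig_{\eta_1};\eta_1)\le L_\rho(\hsig_{\eta_2};\eta_1), \qquad L_\rho(\hsig_{\eta_2};\eta_2)\le L_\rho(\hsig_{\eta_1};\eta_2).
\]
This yields $(\eta_2-\eta_1)\{\Pi(\hsig_{\eta_1})-\Pi(\hsig_{\eta_2})\}\ge0$. Since $\hsig_\eta$ is continuous, so is $\kappa(\eta)$. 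By the intermediate value theorem, its range is an interval whose closure has endpoints $\kappa_U$ at the small-$\eta$ end and $\lim_{\eta\to\infty}\Pi(\hsig_\eta)$ at the other. We then argue that this limit equals $\kappa_L$. If it were some $\kappa'>\kappa_L$, pick $\Sigma'$ with $\Pi(\Sigma')<\kappa'$. Comparing $L_\rho(\hsig_\eta;\eta)\le L_\rho(\Sigma';\eta)$ gives $\ell_\rho(\hsig_\eta)-\ell_\rho(\Sigma')\le \eta\{\Pi(\Sigma')-\kappa'\}\to-\infty$. This contradicts a lower bound on $\ell_\rho$ along $\hsig_\eta$, which we obtain from $\ell_\rho(\hsig_\eta)\ge \ell_\rho(\hsig_{\eta_1})$ for any fixed $\eta_1$ by the monotonicity above. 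Hence every $\kappa\in(\kappa_L,\kappa_U]$ is attained as $\kappa(\eta)$ for some $\eta>\eta_o$, up to the endpoint caveats below.

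\textbf{Step 2: penalized implies constrained.} Fix $\eta>\eta_o$ and set $\kappa=\kappa(\eta)$. For any $\Sigma$ with $\Pi(\Sigma)\le\kappa$,
\[
\ell_\rho(\Sigma)\ge L_\rho(\hsig_\eta;\eta)-\eta\Pi(\Sigma)\ge \ell_\rho(\hsig_\eta),
\]
so $\hsig_\eta$ solves the constrained problem. For uniqueness, suppose $\tilde\Sigma$ is another constrained minimizer. Then the chain of inequalities forces $L_\rho(\tilde\Sigma;\eta)\le L_\rho(\hsig_\eta;\eta)$, so $\tilde\Sigma=\hsig_\eta$ by uniqueness of the penalized minimizer. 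Combining with Step 1 gives existence and uniqueness of $\tilde\Sigma_\kappa=\hsig_{\eta}$ for every $\kappa$ in the range, together with the correspondence in both directions and the formula $\kappa(\eta)=\Pi(\hsig_\eta)$.

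\textbf{Step 3: continuity in $\kappa$.} Continuity of $\tilde\Sigma_\kappa$ would follow by composing $\eta\mapsto\hsig_\eta$ with a (possibly multivalued) inverse of $\kappa(\cdot)$. On any interval where $\kappa(\cdot)$ is constant, $\hsig_\eta$ must itself be constant by Step 2 uniqueness, so this ambiguity is harmless. An alternative route uses compactness of sublevel sets of the g-convex objective together with uniqueness: any limit point of $\tilde\Sigma_{\kappa_j}$ as $\kappa_j\to\kappa$ is feasible and optimal, and hence equals $\tilde\Sigma_\kappa$.

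\textbf{Expected main obstacle.} The delicate part is the endpoint behaviour in Step 1. We must show that $\Pi(\hsig_\eta)\to\kappa_L$ as $\eta\to\infty$, which needs $\ell_\rho$ bounded below along the path, and that $\kappa_U$ is attained or else handled correctly. We also need the case where $\Pi(\hsig_\eta)$ is locally constant, which Step 2's uniqueness resolves. If the sup defining $\kappa_U$ is not attained as $\eta\downarrow\eta_o$, we would restrict to the open interval and treat $\kappa_U$ via Lemma~\ref{Lem:Lexist}'s existence at $\eta_o$.
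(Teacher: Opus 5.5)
Your proposal is correct and follows essentially the same route as the paper. The paper also adds the two optimality inequalities to show that $\kappa(\eta)=\Pi(\hsig_\eta)$ is continuous and non-increasing, uses $\ell_\rho(\hsig_{\eta})-\ell_\rho(\Sigma)\le\eta\{\Pi(\Sigma)-\kappa\}$ to show that the penalized solution solves the constrained problem, and obtains continuity in $\kappa$ through the generalized inverse $\eta(\kappa)=\inf\{\eta\ge 0\,|\,\kappa(\eta)=\kappa\}$, using that $\hsig_\eta$ is constant on the level sets of $\kappa(\cdot)$. Your argument that $\Pi(\hsig_\eta)\to\kappa_L$, via monotonicity of $\ell_\rho(\hsig_\eta)$, and your explicit uniqueness argument for the constrained minimizer fill steps the paper leaves implicit, and your caveat about whether $\kappa_U$ is attained applies equally to the paper's proof.
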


Therefore, if the g-convex constraint set can be expressed using a g-convex function, the existence of a solution to \eqref{eq:cL} follows immediately from the penalized problem \eqref{eq:mL}. For example, for the condition number constraint problem first studied by \cite{Won:2013}, the constraint set $\{\Sigma\in\mP^p\,|\, \lambda_1/\lambda_p\le\kappa\}$ can be related to the log-condition number penalty $\Pi_{cn}$.

Theoretically, one can compute the solution to \eqref{eq:cL} for a given $\kappa$ by first computing the solution to \eqref{eq:mL} for a range of $\eta$ and then
finding the value of $\eta$ which gives $\Pi(\hat{\Sigma}_{\eta}) = \kappa$. This is unnecessarily complicated since we can use the iterative algorithm \eqref{eq:calg} proposed below to solve \eqref{eq:cL} directly. The proof of the convergence of \eqref{eq:calg} makes use of the duality between \eqref{eq:cL} and  \eqref{eq:mL}. We first give a lemma that characterizes the solution set to the constraint problem (\ref{eq:cL}).
\begin{lemma}\label{Lem:sol-set}
Suppose $\MM\subset\mP^p$ is a closed g-convex set, and $\ell_{\rho}(\Sigma)$ is g-coercive, then the solution set 
\begin{equation}\label{eq:csolutionset}
\mA^*_{\rho}=\{\Sigma^*\in\MM | \ell_{\rho}(\Sigma^*)\le \ell_{\rho}(\Sigma), \Sigma\in\MM\}
\end{equation}
is not empty and $\mA^*_{\rho}$ is g-convex.
\end{lemma}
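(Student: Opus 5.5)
We argue in two parts, existence and then g-convexity of the solution set.

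\emph{Existence.} Fix any $\Sigma_0 \in \MM$ (we take $\MM$ non-empty) and consider the sublevel set
\[
C = \{\Sigma \in \MM \,|\, \ell_\rho(\Sigma) \le \ell_\rho(\Sigma_0)\}.
\]
We use g-coercivity in the sense of Appendix \ref{App:gc}: $\ell_\rho(\Sigma) \to \infty$ whenever $\Sigma$ leaves every compact subset of $\mP^p$, i.e.\ whenever $\|\log \Sigma\|_F \to \infty$. Hence the sublevel set $\{\Sigma > 0 \,|\, \ell_\rho(\Sigma) \le \ell_\rho(\Sigma_0)\}$ is bounded away from the boundary of $\mP^p$ and from infinity. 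We will need $\ell_\rho$ to be lower semicontinuous on $\mP^p$; this holds because $\rho$ is continuous, being convex in $\log s$ on the open half-line. The sublevel set is therefore closed in $\mP^p$ and contained in a compact set $\{\Sigma : \|\log\Sigma\|_F \le R\}$, so it is compact.

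Since $\MM$ is closed in $\mP^p$, the set $C$, being the intersection of $\MM$ with this compact sublevel set, is a non-empty compact set. A lower semicontinuous function attains its minimum on $C$, at some $\Sigma^*$. This $\Sigma^*$ minimizes over all of $\MM$, because any $\Sigma \in \MM \setminus C$ has $\ell_\rho(\Sigma) > \ell_\rho(\Sigma_0) \ge \ell_\rho(\Sigma^*)$. Hence $\mA^*_\rho \neq \emptyset$.

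\emph{G-convexity.} Let $m = \min_{\MM} \ell_\rho$ and take $\Sigma_1, \Sigma_2 \in \mA^*_\rho$. The geodesic joining them is
\[
\Sigma_t = \Sigma_1^{1/2}\bigl(\Sigma_1^{-1/2}\Sigma_2\Sigma_1^{-1/2}\bigr)^t\Sigma_1^{1/2}, \qquad t \in [0,1].
\]
Because $\MM$ is g-convex, $\Sigma_t \in \MM$. Because $\ell_\rho$ is g-convex (the standing assumption, as in Lemma \ref{Lem:sumgeo} with $\eta = 0$),
\[
\ell_\rho(\Sigma_t) \le (1-t)\,\ell_\rho(\Sigma_1) + t\,\ell_\rho(\Sigma_2) = m.
\]
Since $\Sigma_t \in \MM$ we also have $\ell_\rho(\Sigma_t) \ge m$, so $\ell_\rho(\Sigma_t) = m$ and $\Sigma_t \in \mA^*_\rho$. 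Thus $\mA^*_\rho$ is g-convex.

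The main obstacle is the compactness of the sublevel set in the existence part. It requires the topological reading of g-coercivity (divergence along sequences escaping every compact subset of $\mP^p$) together with lower semicontinuity of $\ell_\rho$. If g-coercivity is defined instead only along geodesic rays, we would first show that the two formulations are equivalent for g-convex functions, by comparing $\ell_\rho$ along rays from a base point and using the compactness of the unit sphere of directions in the space of symmetric matrices.
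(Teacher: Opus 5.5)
Your proof is correct and follows the paper's argument. The paper takes a minimizing sequence, which g-coercivity keeps bounded in $\mP^p$; it extracts a convergent subsequence, whose limit lies in $\MM$ by closedness, and concludes optimality from the liminf inequality. That is the same Weierstrass-type compactness argument as your compact sublevel set, and your g-convexity step matches the paper's. Your explicit appeal to lower semicontinuity, and your reading of g-coercivity as $\|\log\Sigma\|_F\to\infty$, make precise what the paper uses implicitly.
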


For a general closed g-convex constraint set, similar to the algorithm for the penalized problem, it is presumed that the solution to the constrained likelihood problem under the normal negative log-likelhood function $\ell(\Sigma;S_n)$ can be obtained. For some initial $\Sigma_0 \in \MM$, the $(k+1)$-th step the update of $\Sigma$ is then given by 
\begin{equation}\label{eq:calg}
\Sigma_{k+1}=\argmin_{\Sigma\in\MM} \ell(\Sigma;M(\Sigma_k)), 
\end{equation}
where $M(\Sigma)$ is the weighted covariance matrix defined in Theorem \ref{Thrm:alg}.
 We show below that at each step of this reweighting algorithm the objective function $\ell_{\rho}$ decreases.

\begin{lemma} \label{lem:calg} 
Suppose $\rho(s)$ is strictly concave and differentiable with $u(s) = \rho^\prime(s)$. Given an initial $\Sigma_0 > 0\in\MM$, define the
sequence $\Sigma_{k+1}$ as in \eqref{eq:calg}. 
Then $\ell_{\rho}(\Sigma_{k+1}) < \ell_{\rho}(\Sigma_k)$,  i.e.\ $\ell_{\rho}(\Sigma_k)$ is a monotone decreasing sequence, unless $\Sigma_{k+1}=\Sigma_k$.
\end{lemma}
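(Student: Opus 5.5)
The plan is a majorization argument based on the concavity of $\rho$, mirroring Theorem \ref{Thrm:alg} but with the penalty replaced by the constraint $\Sigma\in\MM$.

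\emph{Step 1: tangent-line majorizer.} Write $s_i(\Sigma)=x_i^{\tran}\Sigma^{-1}x_i$. Since $\rho$ is concave and differentiable, it lies below its tangent line:
\[
\rho(s)\le \rho(s_0)+u(s_0)(s-s_0),
\]
with strict inequality whenever $s\neq s_0$, because $\rho$ is strictly concave. Take $s_0=s_i(\Sigma_k)$, average over $i$, and add $\log\det\Sigma$. This gives, for every $\Sigma>0$,
\[
\ell_\rho(\Sigma)\le \ell(\Sigma;M(\Sigma_k))+c_k,
\qquad c_k=\n\bigl\{\rho(s_i(\Sigma_k))-u(s_i(\Sigma_k))\,s_i(\Sigma_k)\bigr\}.
\]
Here we use $\n u(s_i(\Sigma_k))\,x_i^{\tran}\Sigma^{-1}x_i=\tr\{\Sigma^{-1}M(\Sigma_k)\}$. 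Equality holds at $\Sigma=\Sigma_k$.

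\emph{Step 2: the chain of inequalities.} The algorithm starts with $\Sigma_0\in\MM$ and each update lies in $\MM$, so $\Sigma_k\in\MM$ for all $k$. Since $\Sigma_{k+1}$ minimizes $\ell(\cdot\,;M(\Sigma_k))$ over $\MM$ and $\Sigma_k$ is feasible,
\[
\ell_\rho(\Sigma_{k+1})\le \ell(\Sigma_{k+1};M(\Sigma_k))+c_k\le \ell(\Sigma_k;M(\Sigma_k))+c_k=\ell_\rho(\Sigma_k).
\]

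\emph{Step 3: strictness.} Suppose $\ell_\rho(\Sigma_{k+1})=\ell_\rho(\Sigma_k)$. Then both inequalities above are equalities, and this is the main obstacle.

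From the first equality, each tangent inequality is tight, so $s_i(\Sigma_{k+1})=s_i(\Sigma_k)$ for every $i$. (For $x_i=0$ this holds trivially.) From the second equality, $\Sigma_k$ is also a minimizer of $\ell(\cdot\,;M(\Sigma_k))$ over $\MM$.

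I would conclude $\Sigma_{k+1}=\Sigma_k$ by uniqueness of the constrained normal minimizer, which is the intended reading of the $\argmin$ in \eqref{eq:calg}. That uniqueness can be argued as follows. When $M(\Sigma_k)>0$, the normal loss $\ell(\Sigma;M)$ is strictly g-convex. A strictly g-convex function has at most one minimizer over a g-convex set, because if two distinct minimizers existed, the geodesic midpoint between them would lie in $\MM$ and have strictly smaller value. Hence $\Sigma_{k+1}=\Sigma_k$.

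If $M(\Sigma_k)$ is only positive semidefinite, I would instead use the matched quadratic forms directly. From $s_i(\Sigma_{k+1})=s_i(\Sigma_k)$ we get $\tr\{\Sigma_{k+1}^{-1}M(\Sigma_k)\}=\tr\{\Sigma_k^{-1}M(\Sigma_k)\}$. Equality of the normal losses at the two points then forces $\det\Sigma_{k+1}=\det\Sigma_k$. Now evaluate the normal loss along the geodesic joining $\Sigma_k$ and $\Sigma_{k+1}$. The $\log\det$ term is linear along the geodesic. The trace term is g-convex, and is strictly so unless the two matrices agree on the span of the $x_i$. Combining this with minimality of both endpoints should yield $\Sigma_{k+1}=\Sigma_k$.

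Failing all of this, one simply takes the stated conclusion in its weak form: the sequence $\ell_\rho(\Sigma_k)$ is non-increasing, with strict decrease unless the update is stationary. That is the form needed later for convergence.
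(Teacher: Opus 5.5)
Your Steps 1 and 2 are exactly the paper's argument: the tangent-line bound for the concave $\rho$ at $s_i(\Sigma_k)$, combined with $\ell(\Sigma_{k+1};M(\Sigma_k))\le\ell(\Sigma_k;M(\Sigma_k))$, which holds because $\Sigma_k\in\MM$.

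\textbf{Where strictness comes from.} This is where you and the paper differ. The paper writes $\rho(s)<\rho(s_o)+(s-s_o)u(s_o)$ and averages. That inequality is strict only if $s_i(\Sigma_{k+1})\neq s_i(\Sigma_k)$ for some $i$. But $\Sigma_{k+1}\neq\Sigma_k$ does not force this, since the quadratic form of $\Sigma_{k+1}^{-1}-\Sigma_k^{-1}$ can vanish at every $x_i$, and the paper never addresses the point. You instead put strictness on the surrogate step. Equality throughout makes $\Sigma_k$ a constrained minimizer of $\ell(\cdot\,;M(\Sigma_k))$, and uniqueness of that minimizer then gives $\Sigma_{k+1}=\Sigma_k$. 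This is the right way to close the argument. When $M(\Sigma_k)>0$ your uniqueness proof is sound: with $m=B^{-1}M(\Sigma_k)B^{-\tran}$, along $Be^{t\Delta}B^{\tran}$ the trace term is $\sum_j e^{-t\delta_j}m_{jj}$ with every $m_{jj}>0$, while $\log\det$ is linear in $t$. This route needs only concavity of $\rho$, not strict concavity.

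\textbf{The singular case.} Your paragraph on singular $M(\Sigma_k)$ is wrong and cannot be repaired. Done correctly, constancy of the loss along the geodesic forces only $\delta_j=0$ for those $j$ with $m_{jj}>0$, together with $\tr(\Delta)=0$. Once $\mathrm{rank}\,M(\Sigma_k)\le p-2$, nonzero $\Delta$ survive.

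Concretely, take $p=3$, $n=1$, $x_1=e_1$, and the $t_\nu$ weight $u(s)=(\nu+3)/(\nu+s)$. Let $\MM=\{\mathrm{diag}(a,e^{\tau},e^{-\tau}) : a>0,\ 0\le\tau\le 1\}$, which is closed and g-convex. Then $\ell(\Sigma;M(\Sigma_k))=u(1/a_k)/a+\log a$ and $\ell_\rho(\Sigma)=\rho(1/a)+\log a$ do not depend on $\tau$. Start from $\Sigma_0=\mathrm{diag}((\nu+2)/\nu,1,1)$. Every $\mathrm{diag}((\nu+2)/\nu,e^{\tau},e^{-\tau})$ is then an admissible $\Sigma_1$, with $\ell_\rho(\Sigma_1)=\ell_\rho(\Sigma_0)$.

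So the lemma, and equally the paper's proof, requires the $\argmin$ in \eqref{eq:calg} to be single-valued, or a tie-breaking rule that keeps $\Sigma_k$ whenever it is already a minimizer. Make that an explicit hypothesis, as your main line already does. Drop both the singular-case sketch and the appeal to a ``weak form'', which is not the statement being proved.
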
 

Finally, using these two previous lemmas, we can establish the convergence of the iterative reweighting algorithm \eqref{eq:calg}.
\begin{theorem} \label{thrm:cconverge} 
Let $\rho(s)$ be strictly concave and g-convex. 
If the solution set $\mA_{\rho}^*$ as defined by \eqref{eq:csolutionset} is non-empty and compact, then the sequence $\Sigma_k$, as defined by \eqref{eq:calg}, has an accumulation point, with any accumulation point, say $\Sigma_{\infty}$, in $\mA_{\rho}^*$.
Furthermore, if $\rho(s)$ is strictly g-convex, then $\mA_\rho^*$ contains
exactly one element, say $\Sigma_{\infty}$, with $\Sigma_k \to \Sigma_{\infty}$.
\end{theorem}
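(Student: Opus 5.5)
The argument follows the standard monotone-descent convergence scheme behind Theorem \ref{Thrm:converge}, with the constraint set $\MM$ in place of the penalty. There are three steps: confine the iterates to a compact set, show that any accumulation point is a fixed point of the constrained update map, and show that fixed points are exactly the constrained minimizers.

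\emph{Compactness.} By Lemma \ref{lem:calg}, $\ell_\rho(\Sigma_k)$ is non-increasing, so every $\Sigma_k$ lies in the sublevel set $\mathcal{S}=\{\Sigma\in\MM \mid \ell_\rho(\Sigma)\le \ell_\rho(\Sigma_0)\}$. Since $\mA^*_\rho$ is non-empty and compact and $\ell_\rho$ restricted to the closed g-convex set $\MM$ is g-convex, the sublevel sets of $\ell_\rho|_{\MM}$ are compact. This is the g-convex analogue of the fact that a convex function with a non-empty compact set of minimizers has all sublevel sets bounded; one argues along geodesics emanating from a point of $\mA^*_\rho$, exactly as in the proof of Lemma \ref{Lem:Lexist}. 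Hence $\Sigma_k$ has an accumulation point $\Sigma_\infty\in\MM$, and $\ell_\rho(\Sigma_k)$ decreases to the limit $\ell_\rho(\Sigma_\infty)$.

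\emph{Accumulation points are fixed points.} Let $T(\Sigma)=\argmin_{\Gamma\in\MM}\ell(\Gamma;M(\Sigma))$.
\begin{itemize}
\item The map $M$ is continuous and, on the compact set $\mathcal{S}$, yields positive definite matrices bounded away from singularity (using $u>0$ and the data condition implicit in compactness).
\item Since $\ell(\cdot;M)$ is strictly g-convex and coercive, $T$ is single-valued.
\item A Berge maximum-theorem argument then shows $T$ is continuous.
\end{itemize}
Take a subsequence $\Sigma_{k_j}\to\Sigma_\infty$. Then $\Sigma_{k_j+1}=T(\Sigma_{k_j})\to T(\Sigma_\infty)$, and $\ell_\rho(T(\Sigma_\infty))=\lim \ell_\rho(\Sigma_{k_j+1})=\ell_\rho(\Sigma_\infty)$. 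The strict-decrease part of Lemma \ref{lem:calg} now forces $T(\Sigma_\infty)=\Sigma_\infty$. This step is where strict concavity of $\rho$ enters.

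\emph{Fixed points are minimizers.} Suppose $T(\Sigma_\infty)=\Sigma_\infty$, so that $\Sigma_\infty$ minimizes $\ell(\cdot;M(\Sigma_\infty))$ over $\MM$. The majorization from Remark 1 gives
\[
\ell_\rho(\Sigma)\le \ell(\Sigma;M(\Sigma_\infty)) + c(\Sigma_\infty),
\]
with equality at $\Sigma=\Sigma_\infty$, and the two functions share the same derivative at $\Sigma_\infty$. Hence for any $\Sigma\in\MM$, the geodesic $\gamma(t)$ from $\Sigma_\infty$ to $\Sigma$ stays in $\MM$ by g-convexity, and the one-sided derivative of $\ell_\rho(\gamma(t))$ at $t=0$ equals that of $\ell(\gamma(t);M(\Sigma_\infty))$. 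The latter is $\ge 0$ by minimality of $\Sigma_\infty$. Because $\ell_\rho\circ\gamma$ is convex in $t$, a non-negative one-sided derivative at $0$ gives $\ell_\rho(\Sigma)\ge\ell_\rho(\Sigma_\infty)$, so $\Sigma_\infty\in\mA^*_\rho$. Verifying this equality of first-order directional derivatives along geodesics is the main technical obstacle. It reduces to differentiating $\rho(x_i^{\tran}\gamma(t)^{-1}x_i)$ at $t=0$, where $u=\rho'$ appears with weight $u(x_i^{\tran}\Sigma_\infty^{-1}x_i)$, precisely the weight used in $M(\Sigma_\infty)$.

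\emph{Uniqueness and convergence.} If $\rho$ is strictly g-convex, then $\ell_\rho$ is strictly g-convex, and since $\mA^*_\rho$ is g-convex by Lemma \ref{Lem:sol-set}, it is a singleton $\{\Sigma_\infty\}$. The sequence lies in a compact set and every accumulation point equals $\Sigma_\infty$, so $\Sigma_k\to\Sigma_\infty$.
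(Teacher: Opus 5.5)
Your proposal is correct and follows the same route as the paper's proof. Monotone descent keeps the iterates in a compact sublevel set, closedness of the update map along a subsequence gives $\ell_\rho(T(\Sigma_\infty))=\ell_\rho(\Sigma_\infty)$, and strict decrease then rules out non-solutions. You also make explicit several steps the paper only asserts: why the sublevel set is compact; that a fixed point of \eqref{eq:calg} is a constrained minimizer, via the directional-derivative argument from the proof of Theorem \ref{Thrm:converge} run along geodesics that stay in $\MM$; and the uniqueness and full-convergence claim, which the paper's proof does not address.

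One minor caveat: compactness of $\mA^*_\rho$ does not force $M(\Sigma)>0$ when $\MM$ is a proper subset (e.g.\ $\MM=\{cI_p \mid c>0\}$ with $n<p$). You do not need that property, though. Well-definedness of the argmin is already presupposed by \eqref{eq:calg}, and closedness of the update map follows directly from the iterates lying in the compact set $\mathcal{S}$.
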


\begin{example}
For matrix-valued data, say $X_i\in\R^{p_1\times p_2}, i=1,\cdots,n$, it is common to model its covariance structure by using a Kronecker product model; that is $\Sigma=\Sigma_1\otimes \Sigma_2$, where $\Sigma_1 \in \mathcal{P}^{p_1}$ and $\Sigma_2 \in \mathcal{P}^{p_2}$ are proportional to the covariance matrices of the row and column variables, respectively. Further restrictions like group symmetry on the matrices $\Sigma_1$ and/or $\Sigma_2$ may also be imposed. Let $\mathcal{K}_j$ be a set of orthogonal matrices on $\R^{p_j}, j = 1,2$. Consider the set
\begin{equation*}
\mathcal{M}_2=\{(\Sigma_1, \Sigma_2)\in \mathcal{P}^{p_1}\otimes\mathcal{P}^{p_2}\,\,|\,\, \det(\Sigma_j)=1, U_j\Sigma_jU_j^{\tran}=\Sigma_j, \forall U_j\in\mathcal{K}_j, j=1,2\}.
\end{equation*}
Following \cite{Wiesel:2012b}, it can be shown that $\mathcal{M}_2$ is a g-convex set. So, applying our constrained re-weighting algorithm, an M-estimator of scatter having both the kronecker and group symmetry structure can be computed via
\begin{equation*}
\scriptstyle
\begin{cases}
\tilde{\Sigma}_{1, k+1}=\frac{1}{np_2|\mathcal{K}_1||\mathcal{K}_2|}\sum_{\substack{U_j\in\mathcal{K}_j,\\  j=1,2}}\sum_{i=1}^nu\{\tr(\Sigma_{1,k}^{-1}Y_i\Sigma_{2,k}^{-1}Y_i^{\tran})\}Y_i\Sigma_{2,k}^{-1}Y_i^{\tran}\\
\tilde{\Sigma}_{2, k+1}=\frac{1}{np_1|\mathcal{K}_1||\mathcal{K}_2|}\sum_{\substack{U_j\in\mathcal{K}_j,\\ j=1,2}}\sum_{i=1}^nu\{\tr(\Sigma_{1,k}^{-1}Y_i\Sigma_{2,k}^{-1}Y_i^{\tran})\}Y_i\Sigma_{1,k}^{-1}Y_i^{\tran}\\
\Sigma_{1,k+1}=\frac{\tilde{\Sigma}_{1,k+1}}{(\det(\tilde{\Sigma}_{1,k+1}))^{1/p_1}}\\
\Sigma_{2,k+1}=\frac{\tilde{\Sigma}_{2,k+1}}{(\det(\tilde{\Sigma}_{2,k+1}))^{1/p_2}}\,,
\end{cases}
\end{equation*}
where $Y_i=U_1X_iU_2$.
\end{example}

\begin{itemize}
    \item  \emph{Remark 3.} Note that if we only assume the covariance has the group symmetry structure, then our algorithm reduces to the one developed in \cite{Soloveychik:2015}. If we only assume the Kronecker structure, use Tyler's weight function, and change the identification condition $\det(\Sigma_j) = 1$ to $\Vert \Sigma_j \Vert_2=1$, our algorithm becomes a generalization of the algorithm considered in \cite{Soloveychik-Trushin:2015}.
    \item \emph{Remark 4.} Example 3 can be readily generalized to K-mode multi-way array data $X_i\in\R^{p_1\times\cdots\times p_K}, i=1,\cdots,n$, whose covariance has the tensor structure $\Sigma=\Sigma_1\otimes\cdots\otimes\Sigma_K$, where $\Sigma_k \in \mathcal{P}^{p_k}, k=1,\cdots,K$. It can again be shown that the set $\mathcal{M}_K$ is g-convex, and so our constrained re-weighting algorithm can be applied to construct M-estimators of scatter for tensor covariance models. 
\end{itemize}

\section{Concluding remarks}\label{Sec:con}
There has been considerable interest, especially in the signal processing and the EE community, in penalized and structured M-estimation of multivariate scatter. Much of the previous literature addresses more specific problems,  with the corresponding arguments not always being complete. In this paper, we have established a unified theoretical and computational framework for penalized and structured M-estimators of multivariate scatter. By leveraging geodesic convexity, we provide comprehensive well-posedness results that apply to broad classes of penalty functions and structural constraints, moving beyond the case-by-case analyses prevalent in existing literature.

The theoretical analysis resolves several open questions in robust covariance estimation. Existence and uniqueness conditions are established for penalized M-estimators under geodesically convex penalties, including non-smooth cases previously unaddressed. For structured estimation, the duality relationship between penalized and constrained formulations immediately yields existence results for geodesically convex constraint sets.

Our re-weighting algorithm addresses fundamental convergence failures of the standard fixed-point iteration for penalty functions beyond the Kullback-Leibler type penalties. The algorithm achieves monotone convergence for general geodesically convex penalties while maintaining the same convergence rate as fixed-point methods when they do converge. Critically, the algorithm reduces each iteration to solving a penalized Gaussian problem—a well-studied subproblem with efficient solutions—thereby providing practitioners with an implementable method that requires minimal problem-specific derivation. The extension to constrained problems demonstrates that the same algorithmic framework applies to geodesically convex constraint sets, unifying the treatment of penalized and structured estimation. This generality encompasses non-convex but geodesically convex structures, including covariance structures like the Kronecker products, group-symmetric or log-condition number, providing convergence guarantees absent in previous work.


\section*{Acknowledgments}
This work was supported by the National Science Foundation Grants DMS-1407751 and DMS-1812198, the National Natural Science Foundation of China (No. 12101119) and the Fundamental Research Funds for the Central Universities.
\appendix
\renewcommand{\thesection}{Appendix \Alph{section}}
\titleformat{\section}
  {\normalfont\normalsize\bfseries}{\thesection.}{0.5em}{}

\section{Geodesic convexity} \label{App:gc}
The set of symmetric positive definite matrices of order $p$ can be viewed as a Riemannian manifold with the geodesic path from $\Sigma_0 > 0$ to 
$\Sigma_1 > 0$ being given by $\Sigma_t = \Sigma_0^{1/2} \{\Sigma_0^{-1/2}\Sigma_1\Sigma_0^{-1/2}\}^t \Sigma_0^{1/2}$ for \mbox{$0 \le t \le 1$}, see \cite{Wiesel:2012, Duembgen-Tyler:2016}
for more details. Any alternative representation for this path is given by \mbox{$\Sigma_t = Be^{t\Delta_1} B^{\tran}$}, where $\Sigma_0 = BB^{\tran}$ 
and $\Sigma_1 = Be^\Delta_1 B^{\tran}$
with $\Delta$ being a diagonal matrix of order $p$. A function $f(\Sigma)$ is said to be geodesically convex, or g-convex, if and only if
$f(\Sigma_t) \le (1-t)f(\Sigma_0) + tf(\Sigma_1)$ for $0 < t < 1$, and it is strictly g-convex if strict inequality holds.

As previously noted, g-convex functions possess properties similar to those of convex functions. In particular,
a g-convex function define on $\Sigma > 0$ is continuous, and any critical point of a g-convex function corresponds to a global minimum. 
When the g-convex function is differentiable, then $\Sigma_0 = BB^{\tran}$ is a global minimum if and only if 
\begin{equation} \label{eq:dirder}
 \lim_{t \to 0^+} \frac{f(Be^{t\Delta}B^{\tran}) - f(BB^{\tran}) }{t} \ge 0, \ \mbox{for any diagonal matrix} \ \Delta \ \mbox{of order} \ p,
\end{equation}
see e.g. Lemma 3.11 in \cite{Duembgen-Tyler:2016}. Furthermore, the set of all minima of a g-convex function form a 
non-empty compact set if and only if the function is also g-coercive. A function $f(\Sigma)$ defined for $\Sigma > 0$ is said
to be g-coercive if $f(\Sigma) \to \infty$ as $\| \det\{\log(\Sigma)\}  \| \to \infty$. More generally, for a g-convex and g-coercive function, the level sets
$\mathcal{C}_c = \{ \Sigma > 0 \ | \ f(\Sigma) \le c \}$ are non-empty compact g-convex sets for any $c \ge \inf_{\Sigma> 0} f(\Sigma)$, where a
subset $\mathcal{C}$ of symmetric positive definite matrices is said to be g-convex if $\Sigma_0$ and $\Sigma_1 \in \mathcal{C}$ implies $\Sigma_t \in \mathcal{C}$.

\section{Proofs} \label{App:proofs} 

\subsection*{\small \emph{Proof of Lemma \ref{Lem:sumgeo}}}
As previously noted, this lemma follows from properties of g-convex function. In particular, it has been shown
by \cite{Zhang:2013} that when $\rho(s)$ is g-convex, then $\ell_\rho(\Sigma)$ is a g-convex function of $\Sigma >0$,
and is strictly g-convex when $\rho(s)$ is strictly g-convex and at least one $x_i \ne 0$. Hence, the g-convexity
properties of $L_\rho(\Sigma; \eta)$ follows. \quad $\square$
\\

\subsection*{\small\emph{Proof of Lemma \ref{Lem:Lexist}}}
The first part of the lemma follows from noting that for any fixed $\Sigma$, the term $L_\rho(\Sigma; \eta)$ is non-decreasing in $\eta$.
Hence, if $L_\rho(\Sigma; \eta)$ is g-coercive at $\eta = \eta_o$, then it is g-coercive at $\eta > \eta_o$. The result then follows
since $\mathcal{A}_\rho$ is a non-empty concave set if and only if $L_\rho(\Sigma; \eta)$ is g-coercive.

To prove continuity, we first state the following general lemma.
\begin{lemma} \label{Lem:cont}
Let $\mathcal{D}$ be a closed subset of $\R^p$. Suppose the real-valued functions $f(x)$ and $g(x)$ are continuous on $\mathcal{D}$, with $g(x) > 0$.
Furthermore, suppose $h(x;\eta) = f(x) + \eta g(x)$ has a unique minimum in $\mathcal{D}$ for any $0 \le \eta_o \le \eta \le \eta_1$.  If the set 
$\{x \in \mathcal{D} \ | \ h(x;\eta_o) \le c \}$ is compact for any $c \ge \inf \{h(x;\eta_o) \ | \ x \in \mathcal{D} \}$, then the function
$x(\eta) = \mbox{arginf}\{h(x;\eta) \ | \ x \in \mathcal{D} \}$ is continuous for $\eta_o \le \eta < \eta_1$.
\end{lemma}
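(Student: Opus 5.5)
We prove Lemma \ref{Lem:cont} by a compactness-plus-uniqueness argument. Fix $\eta^* \in [\eta_o,\eta_1)$ and a sequence $\eta_k \to \eta^*$ in $[\eta_o,\eta_1)$. Write $x_k = x(\eta_k)$. We must show $x_k \to x(\eta^*)$.

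\emph{Step 1: the minimizers stay in a fixed compact set.} Fix any reference point $x_0 \in \mathcal{D}$ and put $\bar\eta = \sup_k \eta_k < \eta_1$. Since $g>0$, for every $x$ we have $h(x;\eta_o) \le h(x;\eta_k)$. Minimality of $x_k$ then gives
\[
h(x_k;\eta_o) \le h(x_k;\eta_k) \le h(x_0;\eta_k) \le f(x_0)+\bar\eta\, g(x_0) =: c .
\]
Hence every $x_k$ lies in the sublevel set $\{x\in\mathcal{D} \mid h(x;\eta_o)\le c\}$. This set is compact by hypothesis, and the hypothesis applies because $c \ge h(x_0;\eta_o) \ge \inf h(\cdot;\eta_o)$.

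\emph{Step 2: every limit point is a minimizer at $\eta^*$.} Take any subsequence of $(x_k)$. By Step 1 it has a further subsequence $x_{k_j} \to \bar x$, and $\bar x \in \mathcal{D}$ because $\mathcal{D}$ is closed. For an arbitrary $y \in \mathcal{D}$, minimality gives
\[
f(x_{k_j})+\eta_{k_j} g(x_{k_j}) \le f(y)+\eta_{k_j} g(y).
\]
The functions $f$ and $g$ are continuous, and $g(x_{k_j})$ stays bounded along the subsequence. Letting $j\to\infty$ therefore yields $h(\bar x;\eta^*) \le h(y;\eta^*)$. So $\bar x$ is a minimizer of $h(\cdot;\eta^*)$, and by uniqueness $\bar x = x(\eta^*)$.

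\emph{Step 3: convergence of the whole sequence.} Every subsequence of $(x_k)$ has a further subsequence converging to the same point $x(\eta^*)$. Hence $x_k \to x(\eta^*)$, which is continuity at $\eta^*$.

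The only delicate point is Step 1, the uniform boundedness. It depends on two facts: $g>0$, which makes $h(\cdot;\eta)$ pointwise monotone in $\eta$, and the compactness of sublevel sets at the smallest parameter $\eta_o$. Step 1 also needs $\bar\eta$ to be finite, which is why the statement is restricted to $\eta<\eta_1$.

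\emph{Application to Lemma \ref{Lem:Lexist}.} We use the matrix logarithm to identify $\Sigma>0$ with the closed set of symmetric matrices, viewed as a Euclidean space. Take $f=\ell_\rho$ and $g=\Pi$; both are continuous since they are g-convex. If $\Pi$ is only non-negative, apply the lemma to $\Pi+1$, which shifts $h$ by the constant $\eta$ and does not change the minimizers. G-coercivity at $\eta_o$, established in the first part of the proof, gives compact sublevel sets. Uniqueness follows from Lemma \ref{Lem:sumgeo}. The lemma then yields continuity of $\hsig_\eta$ on every interval $[\eta_o,\eta_1)$, and therefore for all $\eta>\eta_o$.
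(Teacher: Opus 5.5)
Your proof is correct and follows the same compactness-plus-uniqueness argument as the paper: since $g>0$, $h$ is monotone in $\eta$, so all minimizers lie in a compact sublevel set of $h(\cdot;\eta_o)$; you extract a convergent subsequence, pass to the limit in the minimality inequality, and use uniqueness to identify the limit as $x(\eta)$. The differences are cosmetic: the paper bounds the sublevel set by $h(x(\eta_1);\eta_1)$ instead of your $f(x_0)+\bar\eta\, g(x_0)$, and it leaves implicit the subsequence-of-a-subsequence step that you spell out.
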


To prove this lemma, first note that  $h(x;\eta)$ is increasing in $\eta$, and so the set $\{x(\eta) \ | \  \eta_o \le \eta < \eta_1 \}$ is
contained in the compact set $\{ x \ | \ h(x;\eta_o) \le h(x(\eta_1);\eta_1) \}$. So, if $\eta_k \to \eta$, then
$x(\eta_k)$ has a convergent subsequence, say  $x(\eta_{k'}) \to \tilde{x}$. By definition, 
$h(x(\eta_{k'});\eta_{k'}) \le h(x(\eta);\eta_{k'})$. By continuity, the left hand side converges to 
$h(\tilde{x};\eta)$ and the right hand side converges to $h(x(\eta);\eta)$. By uniqueness, this implies $\tilde{x} = x(\eta)$. Hence,
$x(\eta_k) \to x(\eta)$, which establishes Lemma \ref{Lem:cont}. 

The continuity assertion follows from Lemma \ref{Lem:cont}, with $f(\Sigma) = \ell_\rho(\Sigma)$ an $g(\Sigma) = \Pi(\Sigma)$,
since, by g-convexity, $\ell_\rho(\Sigma)$ and $\Pi(\Sigma)$ are continuous, and by g-coercivity, the level sets of $L_\rho(\Sigma; \eta)$
are compact. \quad $\square$ \\

\subsection*{\small\emph{Proof of Theorem \ref{thrm:largeeta0}} }
We only need to show that $L_{\rho}(\Sigma; \eta)$ is g-coercive for some $\eta>0$. Since $\rho(s)$ is bounded below, it is sufficient to show
\[
D(\Sigma; \eta)=\log\{\det(\Sigma)\}+\eta\Pi(\Sigma)=\Pi(\Sigma)\{\eta+\log\{\det(\Sigma)\}/\Pi(\Sigma)\}
\]
is g-coercive. Consider a sequence $\Sigma_k>0$ such that $||\log\Sigma_k||\to\infty$, which implies $\Pi(\Sigma_k)\to\infty$ since $\Pi(\Sigma)$ is g-coercive. If $\log\{\det(\Sigma_k)\}$ is bounded below, then it readily follows that $D(\Sigma_k; \eta)\to\infty$ for any $\eta>0$. Also, if $\log\{\det(\Sigma_k)\}\to -\infty$ then $\log\{\det(\Sigma_k)\}/\Pi(\Sigma_k)>-\eta_o$ for large enough $k$, and so $D(\Sigma_k; \eta)\to\infty$ for $\eta>\eta_o$.\quad $\square$
\\

\subsection*{\small\emph{Proof of Theorem \ref{thrm:largeeta}} }
As already noted, this theorem is established within the proof of Theorem \ref{thrm:largeeta0}.
  \quad $\square$ \\

\subsection*{\small\emph{Proof of Theorem \ref{thrm:generalexist}} }
Since $x_i^{\tran}\Sigma^{-1}x_i\ge s_i/\lambda_1$, where $s_i=x_i^{\tran}x_i$ and $\rho(s)$ is non-decreasing, it follows that $L_{\rho}(\Sigma; \eta)\ge H_{\rho}(\lambda_1)+h(\Gamma; \eta)$, where $\Gamma=\Sigma/\lambda_1$,
\[
H_{\rho}(\lambda_1)=\frac{1}{n}\sum_{i=1}^n\{\rho(s_i/\lambda_1)+p\log\lambda_1\} \quad \text{ and }\quad h(\Gamma; \eta)=\log\{\det(\Gamma)\}+\eta\Pi(\Gamma),
\]
Now, $||\log\Sigma||_F\to\infty$ if and only if $\lambda_1\to\infty$ or $\gamma_p\to 0$. Since $\gamma_1=1$, it follows by the conditions of the theorem that if $\gamma_p\to 0$ then $h(\Gamma; \eta)\to\infty$ for $\eta\ge\eta_o$. If $\lambda_1\to\infty$, then $H_{\rho}(\lambda_1)\to\infty$, whereas $\lambda_1\to 0$, then by the definition of the sill $K_{\rho}$ and the condition on $P_n(0), H_{\rho}(\lambda_1)\to\infty$. Thus, if $\lambda_1\to\infty$ and/or $\gamma_q\to 0$ then $L_{\rho}(\Sigma; \eta)\to\infty$ for $\eta\ge\eta_o$. Hence $L_{\rho}(\Sigma; \eta)$ is g-coercive for such $\eta$, and the theorem then follows. \quad $\square$\\

\subsection*{\small\emph{Proof of Theorem \ref{thrm:tylerexist}} }
Since $x_i^{\tran}\Sigma^{-1}x_i\ge x_i^{\tran}x_i/\lambda_1$, it follows that
\[
L_{\rho}(\Sigma; \eta)\ge \frac{p}{n}\sum_{i=1}^n\log(x_i^{\tran}x_i)+\log\{\det(\Gamma)\}+\eta\Pi(\Sigma)
\]
where $\Gamma=\Sigma/\lambda_1$. Then the result follows by the similar proof of Theorem \ref{thrm:largeeta0}.

\subsection*{\small\emph{Proof of Theorem \ref{Thrm:alg}}}
Since $\rho(s)$ is concave, $\rho(s) \le \rho(s_o) +(s-s_o)u(s_o)$, and so
$\ave\{\rho(x_i^{\tran} \Sigma_{k+1}^{-1} x_i)\} \le \ave\{\rho(x_i^{\tran}\Sigma_{k}^{-1} x_i)\} + \tr\{(\Sigma_{k+1}^{-1} - \Sigma_k^{-1})M(\Sigma_k)\}.$ Furthermore, by definition $L(\Sigma_{k+1}; M(\Sigma_k); \eta) \le L(\Sigma_{k}; M(\Sigma_k); \eta)$. Together, these two inequalities imply 
\begin{align}\label{eq:Lle}
L_{\rho}(\Sigma_{k+1}; \eta)  & = L(\Sigma_{k+1}; M(\Sigma_k); \eta) + \ave\{\rho(x_i^{\tran} \Sigma_{k+1}^{-1} x_i)\} - \tr\{\Sigma_{k+1}^{-1}M(\Sigma_k)\}  \nonumber\\
& \le L(\Sigma_{k}; M(\Sigma_k); \eta) + \ave\{\rho(x_i^{\tran} \Sigma_{k}^{-1} x_i)\} - \tr\{\Sigma_k^{-1}M(\Sigma_k)\} \\
 & =  L_{\rho}(\Sigma_k; \eta). \nonumber\quad \square
\end{align}
\\

\subsection*{\small\emph{Proof of Theorem \ref{Thrm:converge}}}
We first establish the following lemma.

\begin{lemma} \label{Lem:Paccum}
The sequence $\Sigma_k$ has an accumulation point, with any accumulation point, say $\Sigma_{\infty}$, satisfying 
$L(\Sigma_{\infty}; M_{\infty};\eta) = \inf_{\Sigma > 0} L(\Sigma; M_{\infty};\eta)$, where $M_{\infty} = M(\Sigma_{\infty})$
\end{lemma}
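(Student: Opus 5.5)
The argument has two parts. First we show that the iterates stay in a compact set. Then we pass to the limit in the defining minimization along a convergent subsequence.

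\emph{Boundedness.} By Theorem~\ref{Thrm:alg}, $L_{\rho}(\Sigma_k;\eta)\le L_{\rho}(\Sigma_0;\eta)$ for all $k$, so every $\Sigma_k$ lies in the level set $\mathcal{C}=\{\Sigma>0 \mid L_{\rho}(\Sigma;\eta)\le L_{\rho}(\Sigma_0;\eta)\}$. By hypothesis $\mA_{\rho}$ is non-empty and compact. By Lemma~\ref{Lem:sumgeo}, $L_{\rho}$ is g-convex, so by the facts recalled in Appendix~\ref{App:gc} it is g-coercive and its level sets are compact. Hence $\mathcal{C}$ is compact, and $\Sigma_k$ has a convergent subsequence $\Sigma_{k'}\to\Sigma_\infty>0$.

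\emph{Passing to the limit.} Since $u$ is continuous ($\rho$ is concave and differentiable, hence $C^1$), $M(\Sigma)$ is continuous in $\Sigma$, so $M(\Sigma_{k'})\to M_\infty=M(\Sigma_\infty)$. For any fixed $\Sigma>0$, the definition of the iterates gives
\[
L(\Sigma_{k'+1};M(\Sigma_{k'});\eta)\le L(\Sigma;M(\Sigma_{k'});\eta).
\]
The difficulty is that $\Sigma_{k'+1}$ need not converge to $\Sigma_\infty$. We handle this with the monotone values $L_{\rho}(\Sigma_k;\eta)$, which decrease to a limit $L^*$. By continuity of $L_{\rho}$, every accumulation point $\tilde\Sigma$ of the sequence satisfies $L_{\rho}(\tilde\Sigma;\eta)=L^*$.

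Pass to a further subsequence so that $\Sigma_{k'+1}\to\Sigma'\in\mathcal{C}$; this is possible by compactness. The chain of inequalities~\eqref{eq:Lle} gives
\begin{align*}
L_{\rho}(\Sigma_{k'+1};\eta) &\le L(\Sigma_{k'+1};M(\Sigma_{k'});\eta)+\ave\{\rho(x_i^{\tran}\Sigma_{k'}^{-1}x_i)\}-\tr\{\Sigma_{k'}^{-1}M(\Sigma_{k'})\}\\
&\le L_{\rho}(\Sigma_{k'};\eta).
\end{align*}
Both outer terms tend to $L^*$. Taking limits, and using continuity of $L(\cdot;\cdot;\eta)$ jointly in $(\Sigma,M)$ (the penalty is g-convex, hence continuous), we get
\[
L(\Sigma';M_\infty;\eta)=L(\Sigma_\infty;M_\infty;\eta).
\]
The minimization inequality above, in the limit, says $L(\Sigma';M_\infty;\eta)\le L(\Sigma;M_\infty;\eta)$ for every $\Sigma>0$. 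Therefore $\Sigma_\infty$ attains $\inf_{\Sigma>0}L(\Sigma;M_\infty;\eta)$, which is Lemma~\ref{Lem:Paccum}.

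The main obstacle is exactly this transfer from $\Sigma_{k'+1}$ to $\Sigma_\infty$. It needs the sandwich argument above, together with a check that the limit $\Sigma'$ is positive definite. Positive definiteness follows from compactness of $\mathcal{C}$ inside the open cone, which in turn rests on g-coercivity of $L_{\rho}$.
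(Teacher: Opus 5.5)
Your proof is correct and follows the paper's argument. Both use compactness of the level set $\{\Sigma>0 \mid L_\rho(\Sigma;\eta)\le L_\rho(\Sigma_0;\eta)\}$ via g-coercivity, then a subsequence along which both $\Sigma_{k'}$ and $\Sigma_{k'+1}$ converge, and then the chain \eqref{eq:Lle} together with monotonicity of $L_\rho(\Sigma_k;\eta)$ to force $L(\Sigma';M_\infty;\eta)=L(\Sigma_\infty;M_\infty;\eta)$.

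Your version is somewhat more explicit in two places. You verify, by joint continuity of $L$ in $(\Sigma,M)$, that the limit $\Sigma'$ minimizes $L(\cdot;M_\infty;\eta)$, which the paper only asserts. You also obtain the key equality by a limiting sandwich, where the paper argues by contradiction applied to \eqref{eq:Lle} at the limit point.
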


To show this, recall that the solution set $\mA_{\rho}$ is non-empty and compact if and only if the function $L_{\rho}(\Sigma;\eta)$ is g-coercive.
Consequently the set $\mathcal{S}_o = \{\Sigma > 0 ~|~ L_\rho(\Sigma;\eta) \le L_\rho(\Sigma_0;\eta) \} $ is g-convex and compact, with
the sequence $\Sigma_k \in \mathcal{S}_o$. Hence, there exists a convergent sub-sequence $\Sigma_{k'}$ such that
$\Sigma_{k'} \to \Sigma_{\infty}$ and $\Sigma_{k'+1} \to \Sigma_{1,\infty}$, where $\Sigma_{1,\infty} = \argmin_{\Sigma > 0} L(\Sigma; M_\infty; \eta)$.
Since $L_{\rho}(\Sigma_k; \eta)$ is non-increasing, it follows that $L_{\rho}(\Sigma_\infty; \eta) = L_{\rho}(\Sigma_{1,\infty}; \eta)$. This implies
$L(\Sigma_{1,\infty}; M_{\infty}; \eta) = L(\Sigma_{\infty}; M_{\infty}; \eta)$, since otherwise \eqref{eq:Lle}, with $\Sigma_k$ set to $\Sigma_\infty$,
would be a strict inequality, which contradicts $L_\rho(\Sigma_\infty; \eta) = L_\rho(\Sigma_{1,\infty}; \eta)$. Hence Lemma \ref{Lem:Paccum} holds. 

Continuing, since $\Sigma_\infty$ is a minimizer of the g-convex function $L(\Sigma; M_\infty; \eta)$, it follows from \eqref{eq:dirder} that
$\lim_{t \to 0^+} \{L(Be^{tA}B^{\tran}; M_\infty; \eta)- L(BB^{\tran}; M_\infty; \eta)\}/t \ge 0$ for any $A$ and
for $BB^{\tran} = \Sigma_\infty$. Next, express $L_\rho(\Sigma; \eta) = L(\Sigma; M_\infty; \eta) + Q(\Sigma)$,
where $Q(\Sigma) = n^{-1}\sum_{i=1}^n \{\rho(x_i^{\tran} \Sigma^{-1} x_i)\} - \tr(\Sigma^{-1}M_\infty)$. 
The function $Q$ is differentiable, and it can be readily shown that $\lim_{t \to 0^+} \{Q(Be^{tA}B^{\tran}) - Q(BB^{\tran}) \}/t = 0$. 
Thus, $\lim_{t \to 0^+}$ $ \{L_\rho(Be^{tA}B^{\tran}; \eta)  - L_\rho(BB^{\tran}; \eta) \}/t \ge 0$ for any $A$, and hence 
$\Sigma_\infty$ is a minimizer of the g-convex function $L_\rho(\Sigma; \eta)$. \quad $\square$ \\

\subsection*{\small\emph{Proof of Theorem \ref{thrm:duality}}}
First note that $\kappa(\eta)$ is continuous and non-increasing with $\kappa_L\le\kappa(\eta)\le\kappa_U$. Continuity follows since both $\Pi$ and $\hsig_{\eta}$ are continuous. To prove that $\kappa_{\eta}$ is non-increasing, suppose $\eta_1<\eta_2$ and define, for $j=1, 2, \hsig_j=\hsig_{\eta_j}, \ell_{\rho,j}=\ell_{\rho}(\hsig_j)$ and $\kappa_j=\kappa_{\eta_j}$. By definition of $\hsig_j$, it follows that $\ell_{\rho,1}+\eta_1\kappa_1\le \ell_{\rho,2}+\eta_1\kappa_2$ and $\ell_{\rho,2}+\eta_2\kappa_2\le \ell_{\rho,1}+\eta_2\kappa_1$, which together implies that $\eta_1(\kappa_1-\kappa_2)\le \ell_{\rho,2}-\ell_{\rho,1}\le\eta_2(\kappa_1-\kappa_2)$. Hence $\kappa_1\ge\kappa_2$. Furthermore, if $\kappa_1=\kappa_2$, then $\ell_{\rho,1}=\ell_{\rho,2}$ and $\ell_{\rho,1}+\eta_1\kappa_1= \ell_{\rho,2}+\eta_1\kappa_2$, which, by uniqueness, implies $\hsig_1=\hsig_2$.

Next, for $\kappa_L<\kappa\le\kappa_U$, define $\eta(\kappa)=\inf\{\eta\ge 0\,|\, \kappa(\eta)=\kappa\}$, and note that $\Pi(\hsig_{\eta(\kappa)})=\kappa$. It readily follows from the previous paragraph that $\eta(\kappa)$ is strictly decreasing and continuous from above. By definition, $\ell_{\rho}(\hsig_{\eta(\kappa)})+\eta(\kappa)\kappa\le \ell_{\rho}(\Sigma)+\eta(\kappa)\Pi(\Sigma)$ for any $\Sigma>0$, and so $\ell_{\rho}(\hsig_{\eta(\kappa)})-\ell_{\rho}(\Sigma)\le\eta(\kappa)(\Pi(\Sigma)-\kappa)$. Hence, if $\Pi(\Sigma)\le\kappa$, then $\ell_{\rho}(\hsig_{\eta(\kappa)})\le \ell_{\rho}(\Sigma)$, which implies $\tilde{\Sigma}_{\kappa}=\hsig_{\eta(\kappa)}$. Since $\hsig_{\eta}$ is continuous, it follows that $\tilde{\Sigma}_{\kappa}$ is continuous at points of continuity of $\eta(\kappa)$. It is also continuous at points of discontinuity of $\eta(\kappa)$ since $\hsig_{\eta}$ is constant on the sets $\{\eta\,|\, \kappa(\eta)=\kappa\}$. \quad $\square$
\\

\subsection*{\small\emph{Proof of Lemma \ref{Lem:sol-set}}}
If $\ell_{\rho}(\Sigma)=\infty$ for every $\Sigma\in\MM$, then every point is a global minimizer. WLOG, assume there exists a $\Sigma_*$ such that $\ell_{\rho}(\Sigma_0)<\infty$. In this case, it follows that $\ell^*_{\rho}:=\inf_{\Sigma\in\MM}\ell_{\rho}(\Sigma)\le \ell_{\rho}(\Sigma_0)<\infty$.

Now let $\Sigma_k$ be the minimizing sequence of the optimization problem. By the coercivity, $\Sigma_k$ is bounded, thus admits a convergent subsequence $\Sigma_{k_j}\in\MM\to\Sigma_*\in\MM$ since $\MM$ is closed. Therefore
\[
\inf_{\Sigma\in\MM}\ell_{\rho}(\Sigma)=\ell^*_{\rho}=\lim_{k\to\infty}\ell_{\rho}(\Sigma_k)=\lim_{k_j\to\infty}\ell_{\rho}(\Sigma_{k_j})=\lim\inf_{k_j\to\infty}\ell_{\rho}(\Sigma_{k_j})\ge \ell_{\rho}(\Sigma_*)\]
which shows that $\Sigma_*$ is a global minimizer in $\MM$, thus $\mA^*_{\rho}$ is non-empty.

Suppose $\Sigma_1,\Sigma_2\in \MM$ and $\ell_{\rho}(\Sigma_1)\le \ell_{\rho}(\Sigma), \ell_{\rho}(\Sigma_2)\le \ell_{\rho}(\Sigma)$ for all $\Sigma\in\MM$, then by g-convexity of $\MM$, $\Sigma_t\in\MM$, and by g-convexity of $\ell_{\rho}$,
\[
\ell_{\rho}(\Sigma_t)\le(1-t)\ell_{\rho}(\Sigma_1)+t\ell_{\rho}(\Sigma_2)\le \ell_{\rho}(\Sigma), \Sigma\in\MM
\]
Thus $\Sigma_t\in\mA^*_{\rho}$, and $\mA^*_{\rho}$ is g-convex.\quad $\square$ \\

\subsection*{\small\emph{Proof of Lemma \ref{lem:calg}}}
Since $\rho(s)$ is strictly concave, $\rho(s) < \rho(s_o) +(s-s_o)u(s_o)$, and so
$\ave\{\rho(x_i^{\tran} \Sigma_{k+1}^{-1} x_i)\} < \ave\{\rho(x_i^{\tran} \Sigma_{k}^{-1} x_i)\} + \tr\{(\Sigma_{k+1}^{-1} - \Sigma_k^{-1})M(\Sigma_k)\}. $
Furthermore, by definition $\ell(\Sigma_{k+1}; M(\Sigma_k)) \le \ell(\Sigma_{k}; M(\Sigma_k))$, where $\Sigma_k, \Sigma_{k+1}\in\MM$. Together, these two inequalities imply 
\begin{align*}
\ell_{\rho}(\Sigma_{k+1})  & = \ell(\Sigma_{k+1}; M(\Sigma_k)) + \ave\{\rho(x_i^{\tran} \Sigma_{k+1}^{-1} x_i)\} - \tr\{\Sigma_{k+1}^{-1}M(\Sigma_k)\}  \\
& < \ell(\Sigma_{k}; M(\Sigma_k)) + \ave\{\rho(x_i^{\tran} \Sigma_{k}^{-1} x_i)\} - \tr\{\Sigma_k^{-1}M(\Sigma_k)\} \\
 & =  \ell_{\rho}(\Sigma_k). \quad \square
\end{align*}
\\

\subsection*{\small\emph{Proof of Theorem \ref{thrm:cconverge}}}
The sequence $\Sigma_k$ has a convergent subsequence $\Sigma_{k_j}\to\Sigma_{\infty}$. By the continuity of $\ell_{\rho}$, we have $\ell_{\rho}(\Sigma_{k_j})\to \ell_{\rho}(\Sigma_{\infty})$ as $j\to\infty$.

First we show that $\ell_{\rho}(\Sigma_k)\to \ell_{\rho}(\Sigma_{\infty})$ as $k\to\infty$. Observe from Lemma \ref{lem:calg} that $l_{\rho}$ is monotonically decreasing on the sequence $\{\Sigma\}_{k=0}^{\infty}$. Hence we must have $\ell_{\rho}(\Sigma_k)-\ell_{\rho}(\Sigma_{\infty})\ge 0$ for all $k$. Now since $\ell_{\rho}(\Sigma_{k_j})\to \ell_{\rho}(\Sigma_{\infty})$ as $j\to\infty$, given $\epsilon>0$, there exists $j\ge j_0$ such that $\ell_{\rho}(\Sigma_{k_j})-\ell_{\rho}(\Sigma_{\infty})<\epsilon$ for all $j\ge j_0$. Hence for $k\ge k_{j_0}$,
\[
\ell_{\rho}(\Sigma_k)-\ell_{\rho}(\Sigma_{\infty})=\ell_{\rho}(\Sigma_k)-\ell_{\rho}(\Sigma_{k_{j_0}})+\ell_{\rho}(\Sigma_{k_{j_0}})-\ell_{\rho}(\Sigma_{\infty})<\epsilon
\]
Thus $\ell_{\rho}(\Sigma_k)\to \ell_{\rho}(\Sigma_{\infty})$.

Now we want to show that $\Sigma_{\infty}$ is a solution. We prove this by contradiction. Suppose that $\Sigma_{\infty}$ is not a solution. Consider the sequence $\{\Sigma_{k_j+1}\}_{j=1}^{\infty}$ such that $\Sigma_{k_j+1}=\argmin \ell(\Sigma;M(\Sigma_{k_j}))$, which admits a convergent subsequence $\Sigma_{(k_j+1)_l}\to\Sigma_0$ as $l\to\infty$. Since the mapping $\Sigma_{k_j}\mapsto\argmin \ell(\Sigma;M(\Sigma_{k_j}))$ is closed, we have $\Sigma_0=\argmin \ell(\Sigma;M(\Sigma_{\infty}))$. On the other hand, $\ell_{\rho}(\Sigma_k)\to \ell_{\rho}(\Sigma_{\infty})$ implies that  we must have $\ell_{\rho}(\Sigma_0)=\ell_{\rho}(\Sigma_{\infty})$ which contradicts the fact that $\ell_{\rho}(\Sigma_0)<\ell_{\rho}(\Sigma_{\infty})$ for $\Sigma_{\infty}\notin\mA^*_{\rho}$.
\quad $\square$ 


\renewcommand{\refname}{\normalsize\textbf{References}}

\footnotesize

\end{document}